\newcommand{\techRep}{true} 
\newcommand{\iftechrep}{\ifthenelse{\equal{\techRep}{true}}}
\definecolor{darkblue}{rgb}{0,0,0.4}
\newenvironment{qtheorem}[1]{%
{\medskip\noindent\bfseries Theorem #1.\hspace{0mm}}
\begin{itshape}%
}{%
\end{itshape}%
\medskip
}
\tikzset{irrelevant transition/.style={->,>=stealth',thin}}
\tikzset{relevant transition/.style={->,>=stealth',very thick}}
\newcommand{\A}{\mathcal{A}}%
\newcommand{\Hs}[1]{H^{(#1)}}%
\newcommand{\Active}{\mathit{Active}}%
\newcommand{\Nodes}{\mathit{Nodes}}%
\newcommand{\Rs}[1]{R^{(#1)}}%
\newcommand{\Ms}[1]{M^{(#1)}}%
\newcommand{\addchild}{\textsc{add\_child}}%
\newcommand{\children}{\mathit{children}}%
\newcommand{\cnt}{\mathit{cnt}}%
\newcommand{\cutParent}{\textsc{cut\_parent}}%
\newcommand{\depth}{\mathit{depth}}%
\newcommand{\deque}{\textsc{deque}}%
\newcommand{\df}{\emph}
\newcommand{\enque}{\textsc{enque}}%
\newcommand{\expand}{\textsc{expand}}%
\newcommand{\deactivate}{\textsc{deactivate}}%
\newcommand{\gc}{\textsc{gc}}%
\newcommand{\history}{\textsc{history}}%
\newcommand{\initialize}{\textsc{initialize}}%
\newcommand{\level}{\mathit{level}}%
\newcommand{\mem}{\mathit{mem}}%
\newcommand{\mems}[1]{\mathit{mem}^{(#1)}}%
\newcommand{\memOld}{\mathit{memOld}}%
\newcommand{\parent}{\mathit{parent}}%
\newcommand{\pseudocodesize}{\small}%
\newcommand{\queue}{\mathit{queue}}%
\newcommand{\rep}{\mathit{rep}}%
\newcommand{\samos}[1]{s_{\sf amo}^{(#1)}}%
\newcommand{\srts}[1]{s_{\sf rt}^{(#1)}}%
\newcommand{\sgcs}[1]{s_{\sf gc}^{(#1)}}%
\newcommand{\tran}[2]{%
  \mathrel{{%
    \setbox0=\hbox{$\scriptstyle #2$}\dimen0=\wd0 \advance\dimen0 by .75em %
    \tikz[baseline]%
      \draw[#1]%
         (0,0.5ex) -- node[above]
          {\vbox to .2ex{\vss\hbox{$\scriptstyle #2$}}} (\the\dimen0,0.5ex)%
    ;%
  }}%
}
\newcommand{\itran}[1]{\tran{irrelevant transition}{#1}}
\newcommand{\rtran}[1]{\tran{relevant transition}{#1}}
\newcommand{\ini}{\rtran{\ }\!1}%
\tikzset{initial text={}}
\tikzset{every initial by arrow/.style={relevant transition}}
\title{Tree Buffers} 
\author{Radu Grigore \and Stefan Kiefer
}
\institute{University of Oxford}
\begin{document} 
\maketitle
\begin{abstract} 
In \emph{runtime verification}, the central problem
  is to decide if a given program execution violates a given property.
In \emph{online} runtime verification, a monitor observes a program's execution as it happens.
If the program being observed has hard real-time constraints, then the monitor inherits them.
In the presence of hard real-time constraints it becomes a challenge to maintain enough
  information to produce \emph{error traces}, should a property violation be observed.
In this paper we introduce a data structure, called \emph{tree buffer},
 that solves this problem in the context of automata-based monitors:
If the monitor itself respects hard real-time constraints,
  then enriching it by tree buffers makes it possible to provide error traces,
  which are essential for diagnosing defects.
We show that tree buffers are also useful in other application domains.
For example, they can be used to implement functionality of \emph{capturing groups} in regular expressions.
We prove optimal asymptotic bounds for our data structure,
 and validate them using empirical data from two sources:
  regular expression searching through Wikipedia,
  and runtime verification of execution traces obtained from the DaCapo test suite.
\end{abstract}
\section{Introduction} 
\label{sec-intro}

In runtime verification, a program is instrumented to emit events at certain times, such as method calls and returns.
A monitor runs in parallel, observes the stream of events, and identifies bad patterns.
Often, the monitor is specified by an automaton (for example, see \cite{%
  dblp:conf/fm/barringerfhrr12,%
  dblp:conf/rv/0002kv13,%
  dblp:journals/fmsd/finkbeiners04,%
  dblp:conf/isola/havelund14,%
  dblp:journals/corr/abs-1112-5761}).
When the accepting state of the automaton is reached, the last event of the program corresponds to a bug.
At this point, developers want to know how was the bug reached.
For example, the bug could be that an invalid iterator is used to access its underlying collection.
An iterator becomes invalid when its underlying collection is modified,
for instance by calling the \textsc{remove} method of another iterator for the same collection.
In order to diagnose the root cause of the bug, developers will want to determine how exactly the iterator became invalid.
Of particular interest will be an \emph{error trace}:
  the last few relevant events that led to a bug.
In the context of static verification,
  error traces have proved to be invaluable in diagnosing the root cause of bugs%
  ~\cite{dblp:journals/scp/leinoms05}.
However, runtime verification tools
  (such as \cite{%
    dblp:conf/rv/bodden11,%
    dblp:conf/icse/jinmlr12,%
    dblp:conf/rv/luozljmsr14})
  shy away from providing error traces,
  perhaps because adding this functionality would impact efficiency.
The goal of this paper is to provide the algorithmic foundations of efficient monitors that can provide error traces for a very general class of specifications.

\newcommand{\iter}{\mathit{iter}}%
\newcommand{\biter}{\pmb{iter}}%
\newcommand{\hasNext}{\mathit{hasNext}}%
\newcommand{\bhasNext}{\pmb{hasNext}}%
\newcommand{\nex}{\mathit{next}}%
\newcommand{\bnex}{\pmb{next}}%
\newcommand{\other}{\mathit{other}}%
\newcommand{\ba}{\boldsymbol{a}}%
\newcommand{\bb}{\boldsymbol{b}}%

\begin{figure}[tb] 
\begin{tabular}{cc}
\begin{tikzpicture}[scale=2] 
\tikzset{n/.style={draw,circle}}
\tikzset{ia/.style={irrelevant transition}}
\tikzset{ra/.style={relevant transition}}
\node[n] (1) at (1,0) {$1$};
\node[n] (2) at (2,0) {$2$};
\node[n] (3) at (2,0.8) {$3$};
\node[n,accepting] (4) at (3,0) {$4$};
\draw[ra] (0.5,0) -- (1);
\draw[ia] (1) edge[loop,out=250,in=290,looseness=15]  node[below] {$*$} (1); 
\draw[ra] (1) to node[above] {$\biter$} (2);
\draw[ia] (2) edge[loop,out=250,in=290,looseness=15]  node[below] {$\other$} (2);
\draw[ra] (2) to[bend left=20] node[left] {$\bhasNext$} (3);
\draw[ra] (3) to[bend left=20] node[right] {$\bnex$} (2);
\draw[ia] (3) edge[loop,out=200,in=160,looseness=15]  node[left] {$\other,\hasNext$} (3);
\draw[ra] (2) -- node[above] {$\bnex$} (4);
\draw[ia] (4) edge[loop,out=250,in=290,looseness=15]  node[below] {$*$} (4);
\end{tikzpicture}
& \quad
\begin{tikzpicture}[scale=2] 
\tikzset{n/.style={draw,circle}}
\tikzset{ia/.style={irrelevant transition}}
\tikzset{ra/.style={relevant transition}}
\node[n] (1) at (1,0) {$1$};
\node[n] (2) at (2,0) {$2$};
\node[n,accepting] (3) at (3,0) {$3$};
\draw[ra] (0.5,0) -- (1);
\draw[ra] (1) to[bend left=20] node[above] {$\ba$} (2);
\draw[ra] (2) to[bend left=20] node[below] {$\bb$} (1);
\draw[ra] (2) -- node[above] {$\bb$} (3);
\draw[ra] (1) edge[loop,out=110,in=70,looseness=15]  node[above] {$\ba$} (1); 
\draw[ia] (1) edge[loop,out=250,in=290,looseness=15]  node[below] {$b,c$} (1); 
\draw[ia] (2) edge[loop,out=250,in=290,looseness=15]  node[below] {$a,c$} (2);
\draw[ia] (3) edge[loop,out=250,in=290,looseness=15]  node[below] {$a,b,c$} (3);
\end{tikzpicture}
\\ (a) & (b)
\end{tabular}
\caption{Two automata with relevant transitions in boldface.}
\label{fig-intro-NFAs}
\end{figure} 

Nondeterministic automata provide a convenient specification formalism for monitors.
They define both bugs and relevant events.
\autoref{fig-intro-NFAs}a shows an example automaton that specifies incorrect usage of an iterator:
it is a bug if an iterator is created (event~$\iter$),
and afterwards its \textsc{next()} method is called without a preceding call to \textsc{hasNext()}.
Throughout the paper we assume that the user specifies which transitions are \emph{relevant}.
In most applications, there is a natural way to choose the relevant transitions.
For example, in \autoref{fig-intro-NFAs}a and in many other runtime verification properties,
  the natural choice are the non-loop transitions.
Since the choice is natural, it can be automated;
  since the choice is dependent on application details, we do not focus on it.

We have to consider nondeterministic automata in general.
Nondeterministic finite automata allow exponentially more succinct specifications than deterministic finite automata.
In addition,
  in the runtime verification context we must use an automaton model
  that handles possibly infinite alphabets.
For most models of automata over infinite alphabets,
  the nondeterministic variant is strictly more expressive
  than the deterministic variant~\cite{%
    dblp:journals/tcs/bjorklunds10,%
    dblp:conf/focs/kaminskif90,%
    dblp:conf/fossacs/tzevelekosg13}.
Thus, we must consider nondeterminism not only to allow concise specifications,
  but also because some specifications cannot be defined otherwise.

Let us consider a concrete example:
  the automaton in \autoref{fig-intro-NFAs}b,
  consuming the stream of letters $cabbcab$.
(We say \df{stream} when we wish to emphasize that the elements of the sequence
  must be processed one by one, in an online fashion.)
One of the automaton computations labeled by $cabbcab$ is
$
 1 \itran{c} 1 \rtran{\ba} 1 \itran{b} 1 \itran{b} 1 \itran{c} 1 \rtran{\ba} 2 \rtran{\bb} 3
$,
  where relevant transitions are bold.
We say that the subsequence formed by the relevant transitions is an \df{error trace};
  here, $1\rtran{\ba}1\rtran{\ba}2\rtran{\bb}3$.

The main contribution of this paper is the design of a data structure that allows the monitor to do the following while reading a stream:
\begin{enumerate}
\item
The monitor keeps track of the states that the nondeterministic automaton could currently be in.
Whenever the automaton could be in an accepting state, the monitor reports (i) the occurrence of a bug,
and (ii) the last $h$ relevant transitions of a run that drove the automaton into an accepting state.
Here, $h$ is a positive integer constant that the user fixes upon initializing the monitor.
Due to the nondeterminism, a bug may have multiple such error traces,
 but the monitor needs to report only one of them.
\item
The monitor processes each event in a constant amount of \emph{time},
  thus paving the way for implementing real-time runtime verifiers that track error traces.
(There is a need for real-time verifiers~\cite{dblp:conf/rv/pikenw11}.)
Not only the time is constant, but also not much \emph{space} is wasted.
Wasted space occurs if the monitor keeps transitions that are not among the $h$~most recent relevant transitions.
\end{enumerate}
Due to the nondeterminism of the automaton,
those constraints force
the monitor to keep track of a \emph{tree} of computation histories.
For properties that can be monitored with \emph{slicing}~\cite{dblp:journals/corr/abs-1112-5761}
 the tree of computation histories has a very particular shape.
That shape allows for a relatively straightforward technique for providing error traces,
using linear buffers.
However, it has been shown that some interesting program properties,
  including \emph{taint} properties,
  cannot be expressed by slicing~\cite{%
    dblp:conf/fm/barringerfhrr12,%
    dblp:conf/tacas/grigoredpt13}.

In this paper we provide a monitor for \emph{general} nondeterministic automata,
at the same time
satisfying the properties 1~and~2 mentioned above.
The single most crucial step is the design of an efficient data structure, which we call \emph{tree buffer}.
A tree buffer operates on general trees and may be of independent interest.

\begin{figure}[tb] 
\centering\small
\begin{tabular}{c@{\hskip-.25cm}c@{\hskip.5cm}c}
\begin{tikzpicture}[xscale=1,yscale=1.5,baseline=(current bounding box.north)]
\tikzset{n/.style={}}
\tikzset{ia/.style={irrelevant transition}}
\tikzset{ra/.style={relevant transition}}
\node[n] (0) at (0,0) {$(1, \ini)$};
\node[n] (1) at (0,-1) {$(1, \ini)$};
\node[n] (2a) at (-1,-2) {$(1, 1 \rtran{\ba} 1)$};
\node[n] (2b) at (+1,-2) {$(2, 1 \rtran{\ba} 2)$};
\node[n] (3a) at (-1,-3) {$(1, 1 \rtran{\ba} 1)$};
\node[n] (3b) at (+1,-3) {$(3, 2 \rtran{\bb} 3)$};
\draw[ia] (0) to node[left] {$c$} (1);
\draw[ia] (1) to node[above] {$a$} (2a);
\draw[ia] (1) to node[above] {$a$} (2b);
\draw[ia] (2a) to node[left] {$b$} (3a);
\draw[ia] (2b) to node[right] {$b$} (3b);
\end{tikzpicture}
&
\vtop{\vskip0pt
\begin{alg}
\0  $\initialize(\ini)$
\0  $\addchild(\ini,\;1\rtran{\ba}1)$
\0  $\addchild(\ini,\;1\rtran{\ba}2)$
\0  $\deactivate(\ini)$
\0  $\addchild(1\rtran{\ba}2,\; 2\rtran{\bb}3)$
\0  $\history(2\rtran{\bb}3)$
\0  $\deactivate(1\rtran{\ba}2)$
\end{alg}}
&
\begin{tikzpicture}[xscale=1,yscale=1.7,baseline=(current bounding box.north)]
\tikzset{n/.style={draw,rectangle,rounded corners,fill=blue!5}}
\tikzset{ia/.style={irrelevant transition}}
\tikzset{ra/.style={relevant transition}}
\node[n] (0) at (0,0) {$\ini$};
\node[n] (1a) at (-1,-1) {$1 \rtran{\ba} 1$};
\node[n] (1b) at (+1,-1) {$1 \rtran{\ba} 2$};
\node[n] (2b) at (+1,-2) {$2 \rtran{\bb} 3$};
\draw[ia] (1a) -- (0);
\draw[ia] (1b) -- (0);
\draw[ia] (2b) -- (1b);
\end{tikzpicture}
\\ (a) & (b) & (c)
\end{tabular}
\caption{Illustration of a monitor run of the automaton from \autoref{fig-intro-NFAs}b on the stream~$cab$.
Part~(a) shows the monitor's traversal of the automaton with some instrumentation.
Part~(b) shows the sequence of tree buffer operations that the monitor invokes.
Part~(c) shows the tree-buffer data structure that the monitor builds.}
\label{fig-intro-pseudocode}
\end{figure}

\smallskip \noindent \textbf{Tree Buffers for Monitoring. \quad}
A tree buffer is a data structure that stores parts of a tree.
Its two main operations are $\addchild(x,y)$, which adds to the tree a new node~$y$ as a child of node~$x$,
and $\history(x)$, which requests the $h$ ancestors of~$x$, where $h$ is a constant positive integer.
For memory efficiency the tree buffer distinguishes between \emph{active} and \emph{inactive} nodes.
When $\addchild(x,y)$ or $\history(x)$ is called, node~$x$ must be active.
In the case of~$\addchild(x,y)$, the new node~$y$ becomes active.
There is also a $\deactivate(x)$ operation with the obvious semantics.
One of the main contributions of this paper is the design of efficient algorithms that provide the functionality of tree buffers with asymptotically optimal time and space complexity.
More precisely, the $\addchild$ and $\deactivate$ operations take constant time, and the space wasted by nodes that are no longer accessible via $\history$ calls is bounded by a constant times the space occupied by nodes that \emph{are} accessible via $\history$ calls.

In the following, we give an example of how an efficient monitor operates,
  assuming that an efficient tree buffer is available.
Consider the automaton from \autoref{fig-intro-NFAs}b and the stream~$cab$.
The monitor keeps pairs of (1) a current automaton state~$q$, and of (2) a \emph{tree buffer node} with the most recent relevant transition of a run that led to~$q$.
Initially, this pair is $(1, \ini)$, as $1$ is the initial state of the automaton (see \autoref{fig-intro-pseudocode}).

Upon reading $c$,
  the automaton takes the transition $1\itran{c}1$,
  and the monitor simulates the automaton by evolving from $(1,\ini)$ to a new pair $(1,\ini)$:
  the first component remains unchanged because $1\itran{c}1$ is a loop;
  the second component remains unchanged because $1\itran{c}1$ is irrelevant.

Next, $a$~is read.
The automaton takes transitions $1\rtran{\ba}1$ and $1\rtran{\ba}2$, both relevant.
Corresponding to the automaton transition $1\rtran{\ba}1$,
  the monitor evolves $(1,\ini)$ into a new pair $(1,1\rtran{\ba}1)$:
  the first component remains unchanged because $1\rtran{\ba}1$ is a loop;
  the second component \emph{changes} because $1\rtran{\ba}1$ is \emph{relevant}.
Corresponding to the automaton transition $1\rtran{\ba}2$,
  the monitor \emph{also} evolves $(1,\ini)$ into a new pair $(2,1\rtran{\ba}2)$.
Now that two relevant transitions were taken,
  they are added to the tree buffer:
  both $1\rtran{\ba}1$ and $1\rtran{\ba}2$ are children of $\ini$.
Moreover, because $\ini$ is not anymore in any pair kept by the monitor,
  it is deactivated in the tree buffer.

Next, $b$~is read.
The automaton takes transitions $1\itran{b}1$,\; $2\rtran{\bb}1$, and $2\rtran{\bb}3$.
Out of the two transitions with the same target
  the monitor will pick only one to simulate, using an application specific heuristic.
In \autoref{fig-intro-pseudocode}, the monitor chose to ignore $2\rtran{\bb}1$.
Moreover,
  because $1\rtran{\ba}2$
    used to be in the monitor's pairs before $b$ was read but is not anymore,
  its corresponding tree buffer node is deactivated.
Finally, since state~$3$ is accepting,
  the monitor will ask the tree buffer for an error trace,
  by calling $\history(2\rtran{\bb}3)$.

In \autoref{fig:alg-nfa} we provide pseudocode formalizing the sketched algorithm.

\iftechrep{}{The full version of the paper~\cite{gk-cav15-tr} includes missing proofs and further details.}

\section{Tree Buffers} 

Consider a procedure that handles a stream of events.
At any point in time the procedure should be able to output the previous $h$ events in the stream,
 where $h$ is a fixed constant.
Such \emph{linear buffers} are ubiquitous in computer science,
 with applications, for example, in instruction pipelines~\cite{Smith88pipeline},
 voice-over-network protocols~\cite{Gunduzhan01voice}, and distributed operating systems~\cite{Kaashoek91DistrOS}.
Linear buffers can be easily implemented using \emph{circular buffers}, using $\Theta(h)$ memory and constant update time, which is clearly optimal.

While this buffering approach is simple and efficient, it is less appropriate if the streamed data is organized \emph{hierarchically}.
Consider a stream of events, each of which contains a link to one of the previous events.
We already saw an example of how such streams arise in runtime verification (\autoref{fig-intro-pseudocode}).
But, there are many other situations where such streams could arise;
  for example,
  when trees such as XML data are transmitted over a network, or
  when recording the spawned processes of a parallel computation, or
  when recording Internet browsing history.

A natural requirement for a buffer is to store the most recent data.
For a tree this could mean, for example, the leaves of the tree, or the $h$ ancestors of each leaf, where $h$ is a constant.
Observe that a linear buffer does not satisfy such requirements, because an old leaf or the parent of a new leaf may have been streamed much earlier, so that they have been removed from the buffer already.

A \df{tree buffer} is a tree-like data structure that satisfies such requirements.
It supports the following operations:
\begin{itemize}
\item $\initialize(x)$ initializes the tree with the single node~$x$ and makes $x$ active
\item $\addchild(x, y)$ adds node~$y$ as a child of the active node~$x$ and makes $y$ active
\item $\deactivate(x)$ makes $x$ inactive
\item $\expand(x, \{y_1,\ldots,y_n\})$ adds nodes $y_1,\ldots,y_n$ as children of the active node~$x$, makes $x$~inactive, and makes $y_1,\ldots,y_n$ active
\item $\history(x)$ requests the $h$~ancestors of the active node~$x$, where $h$~is a constant positive integer
\end{itemize}
A simple use case of a tree buffer consists of an $\initialize$ operation, followed by $\expand$ operations with $n>0$.
In this case the active nodes are always exactly the leaves.

\begin{figure}[tb]\pseudocodesize 
\hbox to \hsize{%
\hfil\vtop{%
\begin{alg}
\^  $\proc{initialize}(x)$
\0  $\parent(x):={\sf nil}$
\0  $\children(x):=0$
\0  $\Nodes:=\{x\}$
\0  $\Active:=\{x\}$
\0  $\mem:=1$
\0  $\memOld:=1$
\end{alg}
\medskip
\begin{alg}
\^  $\proc{add\_child}(x, y)$
\0  ~assert~ that $x\in\Active$ and $y \not\in\Nodes$
\0  $\parent(y):=x$
\0  $\children(x):=\children(x)+1$
\0  $\Nodes:=\Nodes\cup\{y\}$
\0  $\Active:=\Active\cup\{y\}$
\end{alg}}%
\hfil\vtop{%
\begin{alg}
\^  $\proc{deactivate}(x)$
\0  $\Active:=\Active-\{x\}$
\end{alg}
\medskip
\begin{alg}
\^  $\proc{history}(x)$
\0  ~assert~ that $x\in\Active$
\0  ${\it xs} := []$
\0  ~repeat~ $h$ times, or until $x={\sf nil}$
\1    ${\it xs} := x \cdot {\it xs}$
\1    $x := \parent(x)$
\0  ~return~ ${\it xs}$
\end{alg}
\medskip
\begin{alg}
\^  $\proc{expand}(x, \{y_1,\ldots,y_n\})$
\0  ~for~ $i\in\{1,\ldots,n\}$
\1     $\proc{add\_child}(x,y_i)$
\0  $\proc{deactivate}(x)$
\end{alg}}%
\hfil}
\caption{
  The {\sf naive} algorithm.
}\label{fig:alg-naive}
\end{figure}

The functionality of tree buffers is defined by the \textsf{naive} algorithm shown in \autoref{fig:alg-naive}.
The notation $f(x)$ stands for the field~$f$ of the node~$x$,
  while the notation $\textsc{f}(x)$
  stands for a call to function~$\textsc{f}$ with argument~$x$.
The field $\children$ and the variables $\mem$ and $\memOld$
  do not affect the behavior of the {\sf naive} algorithm:
  they are used later.
The assertions at the beginning of $\addchild$ and $\history$
  detect sequences of operations that are invalid.
For example, any sequence that does not start with a call to $\initialize$ is invalid.
For such invalid sequences,
  tree buffer implementations are not required to behave like the {\sf naive} algorithm.
For valid sequences we require implementations to be functionally equivalent,
 albeit performance is allowed to be different.

The \textsf{naive} algorithm is time optimal:
 $\initialize$, $\addchild$, and $\deactivate$ all take constant time;
 and $\history$ takes $O(h)$~time.
However, it is not space efficient, as it does not take advantage of $\deactivate$ operations: it does not delete nodes that are out of reach of $\history$.
The challenge in designing tree buffers lies in preserving both time and space efficiency.
On the one hand,
  it is not space efficient to store the whole tree.
On the other hand,
  it is not time efficient to exactly identify the nodes that must be stored.

\section{Space Efficient Algorithms} \label{sec:space-efficient} 

The {\sf naive} algorithm is time efficient but not space efficient.
This section presents several other algorithms.
First, if each $\deactivate$ is followed by garbage collection, then the implementation becomes space efficient but not time efficient.
Second, if $\deactivate$ is followed by garbage collection only at certain times, then the implementation becomes both space and time efficient, but only in an amortized sense.
Third, we present an algorithm that is both space and time efficient in a strict sense.
The last algorithm is somewhat sophisticated, and its correctness requires a non-obvious proof.
The implementation of all four algorithms,
  which fully specifies all the details,
  is available online~\cite{tb-impl}.

\begin{figure}[tb]\pseudocodesize 
\hbox to \hsize{%
\hfil%
\vtop{
\begin{alg}
\^  $\proc{gc}()$
\0  ${\it Seen}:=\proc{copy\_of}(\Active)$
\0  ${\it Level}:=\proc{convert\_to\_list}(\Active)$
\0  $i:=1$
\0  ~while~ $i<h$ ~and~ ${\it Level}$ is nonempty
\1    ${\it NextLevel}:=[]$
\1    ~for~ $y\in{\it Level}$
\2      $x:=\parent(y)$
\2      ~if~ $x\notin{\it Seen}$
\3        ${\it Seen}:=\{x\}\cup{\it Seen}$
\3        ${\it NextLevel}:=x\cdot{\it NextLevel}$
\1    ${\it Level}:={\it NextLevel}$
\1    $i:=i+1$
\0  ~for~ $y\in{\it Level}$
\1    $\proc{delete\_parent}(y)$
\end{alg}
\medskip
\begin{alg}
\^  $\proc{deactivate}(x)$
\0  $\Active:=\Active-\{x\}$
\0  $\proc{gc}()$
\end{alg}%
}\hfil\vtop{
\begin{alg}
\^  $\proc{delete\_parent}(y)$
\0  $x:=\parent(y)$
\0  ~if~ $x\ne{\sf nil}$
\1    $\children(x):=\children(x)-1$
\1    ~if~ $\children(x)=0$
\2      $\proc{delete\_parent}(x)$
\2      ~delete~ $x$
\2      $\mem := \mem - 1$
\1  $\parent(y):={\sf nil}$
\end{alg}
\medskip
\begin{alg}
\^  $\proc{add\_child}(x, y)$
\0  ~assert~ that $x\in\Active$
\0  $\parent(y):=x$
\0  $\children(x):=\children(x)+1$
\0  $\Active:=\Active\cup\{y\}$
\0  $\mem:=\mem+1$
\end{alg}
}\hfil}
\caption{
  The {\sf gc} algorithm.
  The tree buffer operations $\initialize$, $\expand$, and $\history$
    are those defined in \autoref{fig:alg-naive}.
}\label{fig:alg-gc}
\end{figure} 

\subsection{The Garbage Collecting Algorithm} 
\label{sub-but-not-both}

A space optimal implementation uses no more memory than needed to answer $\history$ queries.
To make this precise, let us define the \df{height} of a node~$x$ to be the shortest distance from~$x$ to an active node in the subtree of~$x$, were we to use the {\sf naive} algorithm.
Active nodes have height~$0$.
A node with no active node in its subtree has height~$\infty$.
Let $H_i$ be the set of nodes with height~$i$, and let $H_{<i}$ be the set of nodes with height less than~$i$.

The memory needed to answer $\history$ queries is~$\Omega(|H_{<h}|)$, and
  the {\sf gc} algorithm of \autoref{fig:alg-gc} achieves this bound.
On line~5 of~\gc, the list {\it Level\/} represents~$H_{i-1}$,
  and {\it Seen\/} represents~$H_{<i}$.
Thus, on line~13, the list {\it Level\/} represents~$H_{h-1}$,
  and {\it Seen\/} represents~$H_{<h}$.
The procedure {\sc delete\_parent} implements a reference counting scheme.

Let us consider a sequence of $\addchild$ and $\deactivate$ operations,
  coming after $\initialize$.
We call $\addchild$ and $\deactivate$ \df{modifying operations}.
Let $\Hs{k}_i$ be the $H_i$ corresponding to the tree obtained after $k$ modifying operations,
  and let $\sgcs{k}$ be the space used by the {\sf gc} algorithm after $k$ modifying operations.

\begin{proposition}
Consider the {\sf gc} algorithm from \autoref{fig:alg-gc}.
The memory used after $k$ modifying operations is optimal: $\sgcs{k}\in\Theta(|\Hs{k}_{<h}|)$.
The runtime used to process $k$ modifying operations is~$\Theta(k^2)$.
\end{proposition}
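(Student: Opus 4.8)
The plan is to treat the memory claim and the running-time claim separately, and for each to establish a matching upper and lower bound. Throughout I fix the height function from the excerpt and record the key observation that, by definition, a node~$x$ lies in $\Hs{k}_{<h}$ exactly when some active node sits at distance at most $h-1$ in the subtree of~$x$; equivalently, $x$ is one of the nodes returned by the \history\ call issued from that active node. This reachability characterization is what drives both the lower bounds and the correctness of \gc.

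For the memory bound I would prove that, immediately after each modifying operation, the set of stored nodes equals $\Hs{k}_{<h}$. The lower bound $\sgcs{k}\in\Omega(|\Hs{k}_{<h}|)$ is then immediate: every node of $\Hs{k}_{<h}$ is observable through some \history\ query, so any implementation functionally equivalent to the {\sf naive} one must retain all of them. For the upper bound I would first verify the loop invariant announced in the excerpt by induction on the counter~$i$: when the body is entered with value~$i$, the list {\it Level\/} equals $H_{i-1}$ and {\it Seen\/} equals $H_{<i}$. The base case $i=1$ holds because both are initialised to $\Active=H_0$. For the step, a node $x=\parent(y)$ with $y\in H_{i-1}$ and $x\notin H_{<i}$ has height at most~$i$ (because of its child~$y$) and at least~$i$ (because $x\notin H_{<i}$), hence height exactly~$i$; conversely every node of $H_i$ has a minimum-height child in $H_{i-1}$ and is not yet in {\it Seen\/}, so {\it NextLevel\/} collects exactly $H_i$ and {\it Seen\/} grows to $H_{<i+1}$. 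When the loop stops (either at $i=h$ or because {\it Level\/} became empty, in which case the relevant levels are empty) we therefore have {\it Level\/}$=H_{h-1}$ and {\it Seen\/}$=H_{<h}$. It remains to show that the sweep on line~13, calling {\sc delete\_parent} on every $y\in H_{h-1}$, reclaims exactly the stored nodes of $H_{\ge h}$: each such call severs an edge crossing from $H_{<h}$ into $H_{\ge h}$, and the reference-counting cascade deletes an ancestor precisely when its last surviving child is removed. Combining, the surviving nodes are exactly $\Hs{k}_{<h}$; since the temporaries {\it Seen\/}, {\it Level\/}, {\it NextLevel\/} are all subsets of $\Hs{k}_{<h}$, the peak memory is $O(|\Hs{k}_{<h}|)$, giving $\sgcs{k}\in\Theta(|\Hs{k}_{<h}|)$.

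For the running time, the upper bound $O(k^2)$ follows because at most $k+1$ nodes are ever created, so every intermediate tree has $O(k)$ nodes. Each \addchild\ and \deactivate\ is constant-time apart from its \gc\ call; a single \gc\ performs a breadth-first sweep over $\Hs{k}_{<h}$ costing $O(|\Hs{k}_{<h}|)=O(k)$, plus {\sc delete\_parent} work. Since there are $O(k)$ calls to \gc, each contributing $O(k)$ to the sweep, and the total number of node deletions over the whole run is $O(k)$ (each node is deleted at most once), the cumulative cost is $O(k^2)$.

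The matching lower bound $\Omega(k^2)$ I would obtain from an explicit family of sequences. Take $h=2$, \initialize\ a root~$r$, attach $m=\Theta(k)$ children $c_1,\dots,c_m$ to~$r$ via \addchild\ (so $r$ stays active and $H_{<h}$ has size $\Theta(m)$), and then issue $\deactivate(c_1),\dots,\deactivate(c_m)$. Each \deactivate\ triggers a \gc\ whose sweep traverses all currently active nodes, which number $\Theta(m)$ initially and decrease linearly, so the costs sum to $\sum_{j}\Theta(m-j)=\Theta(m^2)=\Theta(k^2)$. The main obstacle is the memory upper bound: proving that \gc\ leaves behind exactly $\Hs{k}_{<h}$ requires the loop invariant together with the delicate claim that the reference-counting cascade reclaims precisely the height-$\ge h$ ancestors and nothing else; the two running-time bounds and the memory lower bound are comparatively routine.
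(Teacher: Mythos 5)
Your four-part decomposition (space lower/upper bound, time upper/lower bound) is sound, and it goes far beyond the paper's own proof, which consists of the sentence ``the space bound is obvious'' plus a single witness sequence for the time bound. Your time analysis is correct: the $O(k^2)$ upper bound is the intended argument, and your bulk-add-then-bulk-deactivate family is a valid alternative to the paper's interleaved witness $\initialize(0)$, $\addchild(0,1)$, $\addchild(0,2)$, $\deactivate(2)$, $\addchild(0,3)$, $\addchild(0,4)$, $\deactivate(4),\dots$; both make each $\gc$ sweep cost $\Omega(|\Active|)$ with $|\Active|$ of order~$k$. Your verification of the loop invariant ({\it Level\/}${}=H_{i-1}$, {\it Seen\/}${}=H_{<i}$) is also fine and matches what the paper asserts in prose.

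The genuine gap is exactly the step you flag as delicate: the claim that after $\gc$ ``the surviving nodes are exactly $\Hs{k}_{<h}$,'' justified by saying the reference-counting cascade reclaims precisely the height-$\ge h$ nodes. This is false for the pseudocode as written, and your own $\Omega(k^2)$ construction refutes it: after $\deactivate(c_1),\dots,\deactivate(c_m)$ every $c_j$ has height $\infty$, hence lies outside $\Hs{k}_{<h}$, yet no $c_j$ is ever deleted --- the $\gc$ sweep proceeds \emph{upward} from $\Active$, and $\textsc{delete\_parent}$ cascades \emph{upward} only, so nothing ever visits a deactivated leaf hanging below the live region. Under your ``stored nodes'' accounting your example therefore has $\sgcs{k}\in\Theta(k)$ while $|\Hs{k}_{<h}|=1$, i.e., it would disprove the very space bound you are proving (the paper's interleaved witness happens to keep $|\Hs{k}_{<h}|$ growing linearly, so it does not expose this). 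What is actually true, and what the proposition must be taken to mean, is that the nodes \emph{reachable from $\Active$ via $\parent$ pointers} are exactly $\Hs{k}_{<h}$: $\gc$ cuts the parent pointer of every node in $H_{h-1}$, so every ancestor of finite height $\ge h$ is deleted or orphaned, while dead subtrees below live nodes become unreachable garbage (reclaimed by the host's memory management, not by $\textsc{delete\_parent}$); alternatively one patches $\deactivate$ to delete childless inactive nodes, as the {\sf real-time} algorithm does with its queue. Note the cascade claim also fails in the opposite direction: the parent of a height-$(h-1)$ node need not have height $\ge h$ --- it can even be an \emph{active} node whose only child is that node, in which case $\textsc{delete\_parent}$ as written deletes an active node --- and an ancestor of finite height $\ge h$ whose remaining in-memory children are undeleted dead leaves keeps $\children>0$ and is never deleted. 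So the proof should not attribute reclamation to the cascade at all; instead, prove the invariant ``reachable-from-$\Active$ equals $\Hs{k}_{<h}$'' after every modifying operation, using your (correct) loop invariant together with the observation that $\addchild$ never changes the height of an existing node, so the cut performed at each $\deactivate$ remains accurate until the next one.
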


The space bound is obvious.
For the time bound, the following sequence exhibits the quadratic behavior:
  $\initialize(0)$,
  $\addchild(0,1)$, $\addchild(0,2)$, $\deactivate(2)$,
  $\addchild(0,3)$, $\addchild(0,4)$, $\deactivate(4)$,
  \dots

\subsection{The Amortized Algorithm} 

\begin{wrapfigure}[11]{r}[1em]{.5\textwidth} 
\vspace{-11ex}
\pseudocodesize
\hbox to \hsize{%
\hfil\vbox{%
\begin{alg}
\^  $\proc{add\_child}(x, y)$
\0  ~assert~ that $x\in\Active$
\0  $\parent(y):=x$
\0  $\children(x):=\children(x)+1$
\0  $\Active:=\Active\cup\{y\}$
\0  $\mem:=\mem+1$
\0  ~if~ $\mem = 2 \cdot \memOld$
\1    $\proc{gc}()$
\1    $\memOld := \mem$
\end{alg}}\hfil}
\caption{
  The {\sf amortized} algorithm.
  The tree buffer operations $\initialize$, $\deactivate$, $\expand$, $\history$
    are those defined in \autoref{fig:alg-naive}.
  The subroutine $\gc$ is that defined in \autoref{fig:alg-gc}.
}\label{fig:alg-amortized}
\vspace{-20ex}
\end{wrapfigure}

Our aim is to mitigate or even solve the time problem of the {\sf gc} algorithm,
  but to retain space optimality up to a constant.
One idea is to invoke the garbage collector rarely, so that the time spent in garbage collection is amortized.
To this end, we call~$\gc$ when the number of nodes in memory has doubled
  since the end of the last garbage collection.
We obtain the {\sf amortized} algorithm from \autoref{fig:alg-amortized}.
It is here that the counters $\mem$~and~$\memOld$ are finally used.

The following theorem states that the {\sf amortized} algorithm is space efficient,
  by comparing it with the {\sf gc} algorithm,
  which is space optimal.
As before, let us consider a sequence of modifying operations.
We write $\samos{k}$ for the space used by the amortized implementation after the first $k$~operations.
Call a sequence of operations \emph{extensive} if every $\deactivate(x)$ is immediately preceded by an $\addchild(x,y)$ for some~$y$.
For example,
  a sequence is extensive if it consists of an $\initialize$ operation followed by $\expand$ operations with $n>0$.

\newcommand{\stmtthmamortized}{
 Consider the {\sf amortized} algorithm in \autoref{fig:alg-amortized}.
 A sequence of $\ell$ modifying operations takes $O(\ell)$ time.
 We have $\samos{k} \in O\big( \max_{j \le k} \sgcs{j} \big)$ for all $k \le \ell$.
 If the sequence is extensive then $\samos{k} \in O\big( \sgcs{k} \big)$ for all $k \le \ell$.
}
\begin{theorem} \label{thm-amortized}
\stmtthmamortized
\end{theorem}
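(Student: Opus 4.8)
The plan is to handle the three claims in turn, all resting on one observation: the {\sf amortized} algorithm behaves like table doubling. Since $\gc$ fires exactly when $\mem$ reaches $2\cdot\memOld$, since $\addchild$ increments $\mem$, and since $\deactivate$ leaves $\mem$ unchanged, between two consecutive garbage collections $\mem$ rises monotonically from $\memOld$ to $2\cdot\memOld$; in particular the phase ending with a given $\gc$ contains exactly $\memOld$ many $\addchild$ operations.

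For the time bound I would use aggregate analysis over these phases. Every operation does $O(1)$ work on its own, contributing $O(\ell)$ in total, so the only remaining cost is the $\gc$ calls. A single $\gc$ run visits only nodes of height~$<h$ in its level traversal and, via \textsc{delete\_parent}, touches each deleted node a constant number of times; since both counts are at most~$\mem$, its cost is $O(\mem)=O(\memOld)$ at the moment it fires. As the phase ending in this call contains $\memOld$ many $\addchild$ operations, the garbage-collection cost is $O(1)$ amortized per $\addchild$. Summing over all phases bounds the total garbage-collection time by $O(\ell)$, hence the whole sequence runs in $O(\ell)$.

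For the second claim I first note that immediately after the $\gc$ firing at operation~$j$, the memory contains exactly the surviving height-$<h$ nodes, so $\memOld = |\Hs{j}_{<h}| = \Theta(\sgcs{j})$ by the space-optimality of $\gc$. Within the following phase $\mem$ never exceeds $2\cdot\memOld$, and if operation~$k$ is itself a firing $\addchild$ then after its $\gc$ we have $\samos{k}=\sgcs{k}$. Either way $\samos{k}\le 2\cdot\memOld$ for the last $\gc$ at some $j\le k$, giving $\samos{k}=O(\max_{j\le k}\sgcs{j})$.

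The last claim is the delicate one and where I expect the \textbf{main obstacle}. The bridge is the elementary sandwich $|\Active^{(k)}| \le |\Hs{k}_{<h}| \le h\,|\Active^{(k)}|$: the left inequality holds because active nodes have height~$0$, and the right because every height-$<h$ node is an ancestor within distance~$<h$ of some active node, and each active node has at most~$h$ such ancestors. Treating $h$ as constant, this yields $\sgcs{k}=\Theta(|\Active^{(k)}|)$, so it suffices to bound $\samos{k}$ by $O(|\Active^{(k)}|)$. Writing $j\le k$ for the last $\gc$, we already have $\samos{k}\le 2\,|\Hs{j}_{<h}| \le 2h\,|\Active^{(j)}|$, so the crux is to show the active count cannot collapse between~$j$ and~$k$. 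Here I would exploit extensiveness: in the window of operations $(j,k]$, map each $\deactivate$ to the $\addchild$ immediately preceding it. This map is injective, and its image omits at most the single $\addchild$ sitting at position~$j$ outside the window, so the window contains at most one more $\deactivate$ than $\addchild$; hence $|\Active^{(k)}| \ge |\Active^{(j)}|-1$. Extensiveness also keeps $\Active$ nonempty (the child added just before any $\deactivate$ stays active), so $|\Active^{(j)}| \le 2\,|\Active^{(k)}|$, and chaining the inequalities gives $\samos{k}\le 4h\,|\Active^{(k)}| \le 4h\,|\Hs{k}_{<h}| = O(\sgcs{k})$. The injective-matching argument is the only nonroutine step; the degenerate prefixes where $\Active$ momentarily empties in a \emph{non}-extensive run, which is exactly why the second claim needs a $\max$ rather than a pointwise bound, are absorbed into the constants.
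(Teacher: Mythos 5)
Your proposal is correct. For the time bound and the $\max$-based space bound you follow essentially the paper's own route: decompose the run into garbage-collection cycles, bound each $\gc$ invocation by $O(\mem)=O(\memOld)$ and charge it against the $\memOld$ many $\addchild$ operations of its cycle (this is the content of \autoref{lem-amortized-segment}), and for space compare $\mems{k}\le 2\cdot\memOld$ against the snapshot $|\Hs{k_0}_{<h}|$ taken at the last collection $k_0\le k$. Where you genuinely diverge is the extensive-sequence claim. The paper proves near-monotonicity of the height sets themselves (\autoref{lem-extensive-nondecr}): in an extensive run, $|\Hs{i}_{<n}|-1\le|\Hs{j}_{<n}|$ for $i\le j$, established by exhibiting, for every node of $\Hs{i}_{<n}$ whose witnessing active node gets deactivated, a replacement witness through the just-added child; it then applies this with $i=k_0$, $j=k$, $n=h$ to get the $h$-independent bound $\mems{k}\le 2\,(|\Hs{k}_{<h}|+1)$. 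You instead prove near-monotonicity of the number of \emph{active} nodes --- via the injective matching of each window $\deactivate$ to its immediately preceding $\addchild$, plus nonemptiness of $\Active$ in extensive runs --- and transfer it to the height sets through the sandwich $|\Active|\le|H_{<h}|\le h\,|\Active|$. Your counting argument is correct and more elementary than the paper's witness-replacement argument, but it costs a factor of $h$ in the constant ($\samos{k}\le 4h\cdot\sgcs{k}$ up to lower-order terms); since $h$ is a fixed constant this is immaterial for the stated $O(\cdot)$ bounds, though the paper's constant is sharper, and the paper's lemma earns extra mileage by being reused later in the proof of \autoref{thm-main}. Both your proof and the paper's rest on the same unproved facts about $\gc$ (that a collection leaves exactly $|H_{<h}|$ nodes in memory, and that $\sgcs{j}\in\Theta(|\Hs{j}_{<h}|)$), so that is not a gap. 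Only your closing remark is slightly off: for non-extensive runs the pointwise bound fails outright (memory can exceed $\sgcs{k}$ by an unbounded factor), not merely ``up to constants''; but this aside plays no role in your argument.
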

Loosely speaking, the theorem says that the space wasted in-between two garbage collections is bounded by the space that would be needed by the space optimal implementation at some earlier time, up to a constant.
It also says that the time used is optimal \emph{for a sequence} of operations.

\subsection{The Real-Time Algorithm} 

In general, interactive applications should not have amortized implementations.
Interactive applications include graphical user interfaces,
  but also real-time systems and
  runtime verification monitors for real-time systems.
More generally speaking,
 the environment, be it human or machine, does not accumulate patience as the time goes by.
Thus, time bounds that apply to each operation are preferable
  to bounds that apply to the sequence of operations performed so far.

The difficulty of designing a {\sf real-time} algorithm stems from the fact that whether a node is needed depends on its height,
but the heights cannot be maintained efficiently.
This is because one $\deactivate$ operation may change the heights of many nodes, possibly far away.

The key idea is to under-approximate the set of unneeded nodes;
  that is, to find a property
    that is easily computable, and only unneeded nodes have it.
To do so, we maintain three other quantities instead of heights.
The \emph{depth} of a node is its distance to the root via $\parent$ pointers,
  were we to use the {\sf naive} algorithm.
The \emph{representative} of a node is its closest ancestor whose depth is a multiple of~$h$.
The \emph{active count} of a node is the number of active nodes that have it as a representative.
Unlike height, these three quantities --- depth, representative, active count --- are easy to maintain explicitly in the data structure.
The depth only needs to be computed when the node is added to the tree.
The representative of a node is either itself or the same as the representative of its parent, depending on whether the depth is a multiple of~$h$.
Finally, when a node is deactivated (added to the tree, respectively), only one active count changes: the active count of the node's representative is decreased (increased, respectively) by~one.

The active count of a representative becomes~$0$ only if its height is at least~$h$, which means it is unneeded to answer subsequent $\history$ queries.
Thus, the set of nodes that are representatives and have an active count of~$0$
  constitutes an under-approximation of the set of unneeded nodes.
The resulting {\sf real-time} algorithm appears in \autoref{fig:alg-realtime}.

\begin{figure}[tb]\pseudocodesize 
\hbox to \hsize{%
\hfil\vtop{
\begin{alg}
\^  $\proc{initialize}(x)$
\0  $\Active:=\{x\}$
\0  $\parent(x):={\sf nil}$
\0  $\children(x):=0$
\0  $\depth(x):=0$
\0  $\rep(x):=x$
\0  $\cnt(x):=1$
\end{alg}
\medskip
\begin{alg}
\^  $\proc{process\_queue}()$
\0  ~if~ $\queue$ is nonempty
\1    $x := \deque()$
\1    $\cutParent(x)$
\1    ~delete~ $x$
\end{alg}
\medskip
\begin{alg}
\^  $\proc{deactivate}(x)$
\0  $\Active:=\Active-\{x\}$
\0  ${\it cnt}({\it rep}(x)) := {\it cnt}({\it rep}(x)) - 1$
\0  ~if~ ${\it children}(x)=0$
\1    $\proc{enque}(x)$
\0  ~if~ ${\it cnt}({\it rep}(x))=0$
\1    $\proc{cut\_parent}({\it rep}(x))$
\0    $\proc{process\_queue}()$
\end{alg}
}\hfil\vtop{
\begin{alg}
\^  $\proc{add\_child}(x, y)$
\0  ~assert~ that $x\in\Active$
\0  ~assert~ that $\cnt(y)=\children(y)=0$
\0  $\Active:=\Active\cup\{y\}$
\0  $\parent(y):=x$
\0  $\children(x):=\children(x)+1$
\0  $\depth(y) := \depth(x) + 1$
\0  ~if~ ${\it depth}(y)\equiv0 \pmod h$
\1    ${\it rep}(y) := y$
\0  ~else~
\1    ${\it rep}(y) := {\it rep}(x)$
\0  ${\it cnt}({\it rep}(y)) := {\it cnt}({\it rep}(y)) + 1$
\0    $\proc{process\_queue}()$
\end{alg}
\medskip
\begin{alg}
\^  $\proc{cut\_parent}(y)$
\0  $x:=\parent(y)$
\0  ~if~ $x\ne{\sf nil}$
\1    ${\it children}(x) := {\it children}(x)-1$
\1    ~if~ ${\it children}(x)=0$ ~and~ $x \not\in\Active$
\2      $\enque(x)$
\0  $\parent(y) := {\sf nil}$
\end{alg}
}\hfil}
\caption{
  The {\sf real-time} algorithm.
  The tree buffer operations $\expand$ and $\history$
    are those defined in \autoref{fig:alg-naive}.
  The $\enque$ and $\deque$ operations are the standard operations of a queue data structure.
}\label{fig:alg-realtime}
\end{figure}

As $\textsc{delete\_parent}$ did in the {\sf gc} algorithm,
  the function $\deactivate$ implements a reference counting scheme,
  using $\children$ as the counter.
Unlike the {\sf gc} algorithm, the node is not deleted immediately,
  but \emph{scheduled for deletion}, by being placed in a queue.
This queue is processed whenever the user calls $\addchild$ or $\deactivate$.
When the queue is processed, by $\textsc{process\_queue}$,
  one node is deleted from memory,
  and perhaps its parent is scheduled for deletion.

The proof of the following
  theorem\iftechrep{, provided in \hyperref[sub-proof-thm-main]{Appendix~\ref*{sub-proof-thm-main}},}{~\cite{gk-cav15-tr}} is subtle.
Similarly as before, we write $\srts{k}$ for the space that the {\sf real-time} algorithm has allocated and not deleted after $k$ operations.

\newcommand{\stmtthmmain}{
Consider the {\sf real-time} algorithm from \autoref{fig:alg-realtime},
  and a sequence of $\ell$ modifying operations.
Every operation takes $O(1)$ time.
We have $\srts{k} \in O\big( \max_{j \le k} \sgcs{j} \big)$ for all $k \le \ell$.
If the sequence is extensive then $\srts{k} \in O\big( \sgcs{k} \big)$ for all $k \le \ell$.
}
\begin{theorem} \label{thm-main}
\stmtthmmain
\end{theorem}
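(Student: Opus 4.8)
The plan is to prove the three claims of Theorem~\ref{thm-main} in turn:
the $O(1)$ per-operation time bound, the general space bound
$\srts{k}\in O\big(\max_{j\le k}\sgcs{j}\big)$, and the sharper extensive bound
$\srts{k}\in O\big(\sgcs{k}\big)$. The time bound should come first and is the
easiest. Each $\addchild$ or $\deactivate$ does a constant amount of bookkeeping
(updating $\parent$, $\children$, $\depth$, $\rep$, $\cnt$) plus exactly one call to
$\processQueue$, which itself does $O(1)$ work: it deletes at most one node per
invocation and calls $\cutParent$ once. The only danger is the recursion
hiding inside the queue mechanism, but here there is none --- $\cutParent$ is
\emph{not} recursive (unlike $\textsc{delete\_parent}$), it merely enqueues the
parent when its child-count hits zero. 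So deletions are spread out: a node is
enqueued once, dequeued once, and each step is charged to a distinct user
operation. A clean way to see this is to observe that the total number of nodes
ever enqueued is at most the number of nodes ever created, which is at most
$\ell+1$, and each $\processQueue$ removes at most one, so the amortized-looking
claim is actually a strict per-operation bound.

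The heart of the proof is the space analysis, and the main obstacle is that the
algorithm deletes nodes according to the easily-maintained predicate
``$\rep(x)$ has $\cnt(x)=0$'', which is only an \emph{under}-approximation of
unneededness (as the text stresses), and moreover deletions are \emph{lazy} ---
deferred through the queue, processed one at a time. So at any moment the live
set $\srts{k}$ may contain nodes that are genuinely unneeded (height $\ge h$)
but have not yet been reached by the trickle of $\processQueue$ calls, or whose
representative still has a positive active count even though the node itself is
out of reach. I would set up the argument by first proving a structural
invariant: a node $x$ survives in memory only if either $x\in H_{<h}$ (it is
genuinely needed), or $x$ lies on a ``chain'' hanging below a representative $r$
whose $\cnt(r)$ has only recently dropped and whose deletion queue has not yet
drained. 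The key combinatorial fact to establish is that the number of such
``stale'' surviving nodes is bounded by a constant times the number of
representatives with positive active count that were needed at some recent time,
which in turn is $O\big(\max_{j\le k}\sgcs{j}\big)$.

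More concretely, I would partition memory into blocks of $h$ consecutive depths
delimited by representatives. A representative $r$ together with the non-rep
descendants below it (down to the next layer of representatives) forms a block,
and deletion of a whole block is triggered exactly when $\cnt(r)=0$, via
$\cutParent(\rep(x))$ followed by the queue draining upward through
$\children$-counting. The crucial estimate is that a block is retained iff its
representative is an ancestor-within-$h$ of some currently active node, i.e.\ iff
$r\in H_{<h}^{(j)}$ for $j=k$ or for some earlier $j$ at which the queue had not
yet caught up. Since $|H_{<h}|$ is precisely what $\sgcs{}$ measures (up to a
constant, by the Proposition and the preceding discussion of the $\gc$
algorithm), and each retained block contributes $O(h)=O(1)$ nodes, summing over
retained blocks yields $\srts{k}\in O\big(\max_{j\le k}|H_{<h}^{(j)}|\big)
= O\big(\max_{j\le k}\sgcs{j}\big)$. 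The $\max$ over $j\le k$ is forced by the
laziness: a block whose representative lost all its active descendants at some
earlier step $j<k$ may still be partly alive at step $k$ because the queue has
not finished, so its size is charged to $\sgcs{j}$ rather than $\sgcs{k}$.

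For the final, sharper claim, I would show that under the extensiveness
hypothesis the queue cannot lag behind: every $\deactivate(x)$ is immediately
preceded by an $\addchild(x,y)$, so each deactivation comes paired with a fresh
active leaf that keeps $\cnt(\rep(x))$ positive unless $x$'s whole
representative-block has genuinely just become unneeded at the current step.
I would argue that in the extensive case the number of nodes queued-but-not-yet-
deleted is $O(1)$ per representative that is live at step $k$, so no appeal to
earlier times is needed and the bound collapses to $\srts{k}\in O\big(\sgcs{k}
\big)$. I expect the delicate bookkeeping to be exactly this coupling between the
one-node-per-operation drain rate of $\processQueue$ and the rate at which blocks
become deletable; making the invariant tight enough to drop the $\max$ under
extensiveness, while honestly keeping it under the general hypothesis, is where
the real work lies.
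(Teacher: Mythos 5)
Your time analysis is correct (the key observation being that $\cutParent$ is non-recursive and $\processQueue$ deletes at most one node per call), and your instinct about where the difficulty lies --- the under-approximation of unneededness plus the lazy, one-node-per-operation queue --- matches the paper. But the core of your space argument has a genuine gap. You charge each stale retained block to the (possibly different) earlier time $j$ at which its representative was last needed, and conclude $\srts{k}\in O\big(\max_{j\le k}\sgcs{j}\big)$. Charging different blocks to different times bounds memory by something like a sum over the charged times, not by a maximum: in the linear-buffer use pattern, the union over $j\le k$ of all representatives that were ever needed has size $\Theta(k/h)$, while $\max_{j\le k}\sgcs{j}$ stays $\Theta(h)$. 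Nor can the charging be repaired by finding one common time $j$ with all stale survivors in $\Hs{j}_{<2h}$: stale blocks from different eras can coexist in memory (doom a subtree, and while the queue drains it, build and doom another), so no single-time containment holds. The bound is true only as a statement about \emph{cardinalities}, and the argument that delivers it is exactly the coupling you name but defer as ``where the real work lies'': whenever memory contains any node that is not recent, the queue is provably nonempty, so the current operation deletes a node and $|\Ms{k}|\le|\Ms{k-1}|$. The paper formalizes this as a dichotomy --- every node is either \emph{recent} (within one level of an active descendant, hence in $H_{<2h}$) or \emph{doomed} (its reference-counting structure guarantees eventual deletion) --- proves that a doomed-but-undeleted node forces a nonempty queue, and then runs an induction on $k$: either $\Ms{k}=\Rs{k}\subseteq\Hs{k}_{<2h}$, or memory did not grow at step $k$. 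That induction, combined with $|H_{<2h}|\le2|H_{<h}|$, \emph{is} the theorem; per-block charging cannot substitute for it.

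Your extensive case also takes a wrong turn. You propose to show that under extensiveness the queue ``cannot lag behind,'' with $O(1)$ queued-but-undeleted nodes per live representative. That is not what happens: a single $\deactivate$ can doom a large subtree even in an extensive sequence, and the queue then lags for many subsequent operations. What extensiveness actually buys is near-monotonicity of the needed set, $|\Hs{i}_{<n}|-1\le|\Hs{j}_{<n}|$ for $i\le j$ (\autoref{lem-extensive-nondecr} in the paper). Feeding that into the same induction replaces $\max_{j\le k}|\Hs{j}_{<2h}|$ by $|\Hs{k}_{<2h}|$: the max collapses because the past maximum is dominated by the present value, not because the queue stays short.
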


\section{Monitoring}\label{sec:monitoring} 

Consider a nondeterministic automaton
$\A = (Q,E,q_0,F,\delta_i,\delta_r)$, where
  $Q$~is a set of states,
  $E$ is the alphabet of events,
  $q_0\in Q$ is the initial state,
  $F\subseteq Q$ contains the accepting states, and
  $\delta_i,\delta_r\subseteq Q\times E\times Q$ are, respectively, the irrelevant and the relevant transitions.
We aim to construct a monitor that reads a stream of events and reports an error trace when an accepting state has been reached.
Since $\A$ is in general nondeterministic and there are both irrelevant and relevant transitions, building an efficient monitor for~$\A$ is not straightforward.
We have sketched in the introduction how to use a tree buffer for such a monitor.
The algorithm in \autoref{fig:alg-nfa} makes this precise.

The main invariants (line~4) are the following:
\begin{itemize}
\item
If the pair $(q,{\it node})$ is in the list $\it now$,
  then $\history(node)$ would return the last $\le h$ relevant transitions
  of some computation $q_0 \mathrel{\stackrel{w}{\to}\!\!{}^*} q$ of~$\A$,
  where $w$~is the stream read so far.
\item
If there is a computation $q_0 \mathrel{\stackrel{w}{\to}\!\!{}^*} q$ of~$\A$,
  then, after reading~$w$, a pair $(q, {\it node})$ is in the list $\it now$, for some $\it node$.
\end{itemize}
A node~$x$
  is created and added to the tree buffer when a relevant transition is taken (lines~10--11).
The node~$x$ is deactivated (line~19)
  when and only when it is about to be removed from the list $\it now$ (line~20),
  since neither $\addchild(x,\cdot)$ nor $\history(x)$ can be invoked later.

In the following subsections we give two applications for this monitor.
The $\it location$, which accompanies events (lines 5~and~10),
  is application dependent.
For regular expression searching, the $\it location$ is an index in a string;
for runtime verification, the $\it location$ is a position in the program text.

\begin{figure}[tb]\pseudocodesize 
\begin{alg}
\^  $\proc{monitor}()$
\0  ${\it root\_node} := \textsc{make\_node}(\rtran{\ }\!q_0,\;{\sf nil})$
\0  $\initialize({\it root\_node})$
\0  ${\it now}, {\it nxt} := [(q_0, {\it root\_node})], []$
\0  ~forever~
\1    $a, {\it location} := \proc{get\_next\_event\_and\_location}()$
\1    ~for~ each $(q, {\it parent})$ in the list $\it now$
\2      ~for~ each $a$-labeled transition $t = (q \stackrel{a}{\to} q') \in \delta_i \uplus \delta_r$
\3        ~if~ $\lnot{\it in\_nxt}(q')$
\4          ~if~ $t \in \delta_r$
\5            ${\it child} := \proc{make\_node}(t, {\it location})$
\5            $\addchild({\it parent}, {\it child})$
\4          ~if~ $t \in \delta_i$
\5            ${\it child} := {\it parent}$
\4          append $(q', {\it child})$ to $\it nxt$
\4          ${\it in\_nxt}(q'), {\it in\_nxt}({\it child}) := {\sf true}, {\sf true}$
\4          ~if~ $q' \in F$
\5            $\proc{report\_error}(\history({\it child}))$
\1    ~for~ each $(q, {\it node})$ in the list $\it now$
\2      ~if~ $\lnot{\it in\_nxt}({\it node})$ ~then~ $\deactivate({\it node})$
\1    ${\it now}, {\it nxt} := {\it nxt}, []$
\1    ~for~ each $(q, {\it node})$ in the list $\it now$
\2      ${\it in\_nxt}(q), {\it in\_nxt}({\it node}) := {\sf false}, {\sf false}$
\end{alg}
\caption{
  A monitor for the automaton $\A = (Q,E,q_0,F,\delta_i,\delta_r)$.
  The monitor reports error traces by using a tree buffer.
}
\label{fig:alg-nfa}
\end{figure}

\subsection{Regular-Expression Searching} \label{sub:regex} 

We show that regular-expression searching with \emph{capturing groups}
can be implemented by constructing an automaton with irrelevant and relevant transitions,
 and then running the monitor from \autoref{fig:alg-nfa}.
Suppose we want to search Wikipedia for famous people with reduplicated names,
  like `Ford Madox Ford'. 
One approach is to use the following (Python) regular expression:
\begin{align}\label{eq:ex-regex}
  \texttt{Ford(\textvisiblespace[A-Z][a-z]*)\{$m$,$n$\}\textvisiblespace Ford}
\end{align}
This expression matches names starting and ending with `Ford', and with at least $m$ and at most~$n$ middle names in-between.
The parentheses indicate so-called \emph{capturing groups}:
The regular-expression engine is asked to remember (and possibly later output) the position in the text
 where the group was matched.
We can implement this as follows.
First, we compile the regular expression with capturing groups into an automaton
  with relevant and irrelevant transitions.
Which transitions are relevant could be determined automatically using the capturing groups,
 or the user could specify it using a special-purpose extension of the syntax of regular expressions.
Whenever the automaton takes a relevant transition, the position in the text should be remembered.
Then we run the monitor from \autoref{fig:alg-nfa} on this automaton.
In this way we can output the last $h$ matches of capturing groups.
In contrast, standard regular-expression engines would report only the last occurrence of each match.
In the example expression~\eqref{eq:ex-regex}, they would report only the last of Ford's middle names.
One would have to unroll the expression $n$ times in order to make a standard engine report them all.

For the regular expression~\eqref{eq:ex-regex}, we remark that any equivalent \emph{deterministic} automaton has $\Omega(2^m)$ states, so nondeterminism is essential for feasibility%
\footnote{
  We use a large value for $m$ when we want to find people with reduplicated names
    that are long.
  By searching Wikipedia with large values for $m$ we found, for example,
    `Jos\'e Mar\'ia del Carmen Francisco Manuel Joaqu\'in \emph{Pedro} Juan Andr\'es Avelino Cayetano Venancio Francisco de Paula Gonzaga Javier Ram\'on Blas Tadeo Vicente Sebasti\'an Rafael Melchior Gaspar Baltasar Luis \emph{Pedro} de Alc\'antara Buenaventura Diego Andr\'es Apostol Isidro'
    (a Spanish don).
}.

\subsection{Runtime Verification} \label{sub:rv} 

For runtime verification we use the monitor from \autoref{fig:alg-nfa} as well,
 in the way we sketched in the introduction.
Clearly, for real-time runtime verification the {\sf real-time} tree buffer algorithm needs to be used.

We have not yet emphasized one feature of our monitor,
 which is essential for runtime verification:
The automaton $\A = (Q,E,q_0,F,\delta_i,\delta_r)$ may have an infinite set $Q$ of states,
 and it may deal with infinite event alphabets~$E$.
Note that we did not require any finiteness of the automaton for our monitor.
We can implement the monitor from \autoref{fig:alg-nfa},
 as long as we have a finite \emph{description} of~$\A$,
 which allows us to loop over transitions (line~7) and to store individual states and events.
One can view this as constructing the (infinite) automaton on the fly.
For instance, the event alphabet could be
$
 E = \Sigma \times {\sf Value}
$, 
 where $\Sigma=\{{\it iter}, {\it hasNext}, {\it next}, {\it other}\}$ and
 ${\sf Value}$ is the set of all program values, which includes integers, booleans, object references, and so on.
There are various works on automata over infinite alphabets and with infinitely many states.
In those works, infinite (-state or -alphabet) automata are usually called \emph{configuration graphs},
 whereas the word \emph{automaton} refers to a finite description of a configuration graph.
In contrast to the rest of the paper, we use that terminology in the rest of this paragraph.
Often there exists an explicitly defined translation of an automaton to a configuration graph
  (for example, for
  register automata~\cite{dblp:conf/focs/kaminskif90},
  class memory automata~\cite{dblp:journals/tcs/bjorklunds10}, and
  history register automata~\cite{dblp:conf/fossacs/tzevelekosg13}).
Even when the semantics are not given in terms of a configuration graph,
  it is often easy to devise a natural translation.
For example, the configuration graph in \autoref{fig:hasnext-unfolded}
 is obtained from the automaton of \autoref{fig-intro-NFAs}a using an obvious translation
 that would also apply
  in the case of data automata~\cite{dblp:conf/lics/bojanczykmssd06}
  and in the case of slicing~\cite{dblp:journals/corr/abs-1112-5761}.

\begin{figure}[tb] 
\begin{tikzpicture}[auto]
\tikzset{state/.style={draw,circle,inner sep=1pt}}
\node[initial,state] (start) at (0,0) {};
\node[accepting,state] (error) at (5,0) {};
\draw[irrelevant transition]
  (start) edge[loop right] node[below] {$\scriptstyle A$} ()
  (error) edge[loop left] node[below] {$\scriptstyle A$} ()
;

\def\middlepart#1{
  \node[state] (#1-invalid) at ($(.5+#1,2.8-0.6*#1)$) {};
  \node[state] (#1-valid) at ($(.5+#1,0.2-0.2*#1)$) {};
  \draw[relevant transition]
    (start) |- node[anchor=south west]{$\scriptstyle I(#1)$} (#1-invalid);
  \draw[relevant transition]
    (#1-invalid) -| node[anchor=south east]{$\scriptstyle N(#1)$} (error);
  \draw[relevant transition]
    (#1-invalid) edge[bend right=15]
      node[anchor=north,sloped,pos=.8]{$\scriptstyle H(#1)$} (#1-valid)
    (#1-valid) edge[bend right=15]
      node[anchor=north,sloped,pos=.2]{$\scriptstyle N(#1)$} (#1-invalid)
  ;
  \draw[irrelevant transition]
    (#1-invalid) edge[loop above] node[right] {$\scriptstyle X(#1)$} ()
    (#1-valid) edge[loop below] node[below] {$\scriptstyle Y(#1)$} ()
  ;
}
\middlepart{1}
\middlepart{2}
\middlepart{3}
\node[anchor=north west] at (0.2,1) {$\vdots$};
\node[anchor=north east] at (4.8,1) {$\vdots$};
\foreach \p in {-1,0,1} {
  \fill ($(4.5+0.2*\p,-0.6-0.04*\p)$) circle (.7pt);
}
\node[anchor=north west] at (5.5,2.2) {\vbox{\hsize=5.7cm
  $I(k) = \{({\it iter}, k)\}$\\
  $H(k) = \{({\it hasNext}, k)\}$\\
  $N(k) = \{({\it next}, k)\}$\\
  $O(k) = \{({\it other}, k)\}$\\
  $A =\bigcup_{k\in{\sf Value}} \bigl(I(k)\cup H(k) \cup N(k) \cup O(k)\bigr)$\\
  $X(k)=A-H(k)-N(k)$ \\
  $Y(k)=A-N(k)$
}};
\end{tikzpicture}
\caption{
  The configuration graph of \autoref{fig-intro-NFAs}a.
  The arcs are labeled by \emph{sets} of events,
    meaning that there is one transition for each event in the set.
  The picture shows only three values from ${\sf Value}=\{1,2,3,\ldots\,\}$
}
\label{fig:hasnext-unfolded}
\end{figure}

\section{Experiments}\label{sec:experiments} 

This section complements the asymptotic results of \autoref{sec:space-efficient}
  with experimental results from three data sets.
\iftechrep{
  The implementation, datasets, and experimental logs are available online~\cite{tb-impl}.
}{}

\subsection{Datasets} 

\begin{enumerate}
\item
The first dataset is a sequence of $n = 10^7$ operations
  that simulate a sequence of linear buffer operations.
That is, we called the tree buffer as follows:
$
  \initialize(0);\;\;
  \expand(0,\{1\});\;
  \ldots\,;\;
  \expand(n-1,\{n\})
$.
\item
We produced (manually) the automaton in \autoref{fig:aa-automaton} from the regular expression
  `{\tt .*a(\textvisiblespace *[\^{} ])\{8\}\textvisiblespace *a}',
  and ran the monitor from \autoref{sec:monitoring} on the text of Wikipedia.
This dataset contains $7\cdot10^8$ tree buffer operations.
\begin{figure}[htb]\centering 
\begin{tikzpicture}[auto]
\tikzset{state/.style={draw,circle,inner sep=1pt}}
\foreach \x in {1,2,...,9} {
  \node[state] (\x) at (\x,0) {$\x$};
}
\node[state,initial] (0) at (0,0) {$0$};
\node[state,accepting] (10) at (10,0) {\phantom{$0$}};
\path[relevant transition]
  \foreach \x in {1,2,...,9} {
    (\x) edge[loop below] node {\textvisiblespace} ()
  }
  (0) edge node {{\it a}} (1)
  (9) edge node {{\it a}} (10)
;
\path[irrelevant transition]
  \foreach \x [evaluate=\x as \sx using int(\x+1)] in {1,2,...,8} {
    (\x) edge node {[\^{}\textvisiblespace]} (\sx)
  }
  (0) edge[loop below] node{*} ()
;
\end{tikzpicture}
\caption{
  A nondeterministic automaton without a small, deterministic equivalent:
  It finds substrings that contain $10$ non-space characters,
    the first and last of which are~`a'.
  The structure of the automaton is similar to the one
  corresponding to the regular expression from \autoref{sub:regex}.
}
\label{fig:aa-automaton}
\end{figure}
\item
We ran the monitor from \autoref{sec:monitoring} on infinite automata alongside the DaCapo test suite.
The property we monitored
  was specified using a TOPL automaton~\cite{dblp:conf/tacas/grigoredpt13},
  and it was essentially the one in \autoref{fig-intro-NFAs}a:
it is an error if there is a $\textsc{next}$ without a preceding $\textsc{hasNext}$ that returned $\sf true$.
We used the projects
  avrora (simulator of a grid of microcontrollers),
  eclipse (development environment),
  fop (XSL to PDF converter),
  h2 (in memory database),
  luindex (text indexer),
  lusearch (text search engine),
  pmd (simple code analyzer),
  sunflow (ray tracer),
  tomcat (servlet server),
  and xalan (XML to HTML converter)
from version~9.12 of the DaCapo test suite~\cite{dblp:conf/oopsla/blackburnghkmbdffghhjlmpsvdw06}.
This dataset contains $8\cdot10^7$ tree buffer operations.
\end{enumerate}

\subsection{Empirical Results} 

\begin{figure}[p] 
\def\lf#1#2{\begin{minipage}{2in}\centering\includegraphics{plots/#1}\par#2\end{minipage}}
\hbox to\hsize{\hss%
\begin{minipage}{6.1in}
\lf{chain-runtime-nogc.png}{(a)~as linear buffers}
\lf{regexp-runtime-nogc.png}{(b)~regular expression searching}
\lf{rv-runtime-nogc.png}{(c)~runtime verification}
\end{minipage}\hss}
\caption{
  The average number of memory references per tree buffer operation.
}\label{fig:exp-runtime-nogc}
\hbox to\hsize{\hss%
\begin{minipage}{6.1in}
\lf{chain-runtime-perop.png}{(a)~as linear buffers}
\lf{regexp-runtime-perop.png}{(b)~regular expression searching}
\lf{rv-runtime-perop.png}{(c)~runtime verification}
\end{minipage}\hss}
\caption{
  Histogram for the number of memory references per operation, for $h=100$.
}\label{fig:exp-runtime-variability}
\hbox to\hsize{\hss%
\begin{minipage}{6.1in}
\lf{chain-memory.png}{(a)~as linear buffers}
\lf{regexp-memory.png}{(b)~regular expression searching}
\lf{rv-memory.png}{(c)~runtime verification}
\end{minipage}\hss}
\caption{
  How much space is necessary.
}\label{fig:exp-space}
\end{figure}

We measure space and time in a way that is machine independent.
For space, there is a natural measure: the number of nodes in memory.
For time, it is less clear what the best measure is:
We follow Knuth~\cite{dblp:books/daglib/0071477}, and count memory references.

\paragraph{Runtime versus History.}

\autoref{fig:exp-runtime-nogc} gives
  the \emph{average} number of memory references per operation.
We observe that this number does not depend on~$h$,
  except for very small values of~$h$,
  thus validating the asymptotic results about time from \autoref{sec:space-efficient}.
\iftechrep{
  \autoref{fig:exp-runtime} in \autoref{sec:additionalmaterial} confirms
  that the {\sf gc} algorithm is much slower than the others.
}{}

\paragraph{Runtime Variability.}

\autoref{fig:exp-runtime-variability} shows that
  for the {\sf amortized} and {\sf gc} algorithms
  there exist operations that take a long time.
In contrast, the plots for the {\sf naive} and the {\sf real-time} algorithms
  are almost invisible
  because they are completely concentrated on the left side
    of \autoref{fig:exp-runtime-variability}.

\paragraph{Memory versus History.}

In \autoref{fig:exp-space},
  we notice that the memory usage of the {\sf amortized} and the {\sf real-time}
  algorithms is within a factor of~$2$ of the memory usage of the {\sf gc} algorithm,
  thus validating the asymptotic results about space from \autoref{sec:space-efficient}.
The {\sf naive} algorithm is excluded from \autoref{fig:exp-space}
  because its memory usage is much bigger than that of the other algorithms.

\section{Conclusions, Related Work, and Future Work} 

We have designed \emph{tree buffers}, a data structure that generalizes linear buffers.
A tree buffer consumes a stream of events
  each of which declares its parent to be one of the preceding events.
Tree buffers can answer queries that ask for the $h$~ancestors
  of a given event.
Implementing tree buffers with good performance is not easy.
We have explored the design space by developing four possible algorithms
  ({\sf naive}, {\sf gc}, {\sf amortized}, {\sf real-time}).
Two of those are straightforward:
{\sf naive} is time optimal, and {\sf gc} is space optimal.
The other two algorithms are time and space optimal at the same time:
  {\sf amortized} is simpler but not suitable for real-time use,
  and {\sf real-time} is more involved but suitable for real-time use.
Proving the {\sf amortized} and the {\sf real-time} algorithms correct requires some care.
We have validated our algorithms on data sets from three different application~areas.

Algorithms that process their input in a gradual manner have been studied under the names of
  online algorithms, dynamic data structures, and, more recently, streaming algorithms.
These algorithms address different problems than tree buffers.
For example, streaming algorithms~\cite{streams1,streams2} fall into two classes:
  those that process numeric streams, and those that process graph streams.
Graph streaming algorithms are concerned with problems such as:
  `Are vertices $u$~and~$v$ connected in the graph described so far?'
One of the basic tools used for answering such questions are link-cut trees~\cite{link-cut}.
Yet, like all the existing graph streaming algorithms,
  link-cut trees do not give more weight to the recent parts of the tree,
  in the way tree buffers do.
Such a preference for recent data has been studied only in the context of numeric streams.
For example, the following problem has been studied:
  `Which movie is most popular \emph{currently}?'
\cite[Section 4.7]{streams2}

The closest relatives of tree buffers remain the simple and ubiquitous linear buffers.
Since tree buffers extend linear buffers naturally,
 it is easy to imagine a wide array of applications.
We have discussed an engine for regular expression searching as one example.
The main motivation of our research is to enhance
 \emph{runtime verification monitors} with the ability to provide error traces,
 fulfilling real-time constraints if needed,
 and covering general nondeterministic automata specifications.
We have described this application in detail.

Several automata models that are used in runtime verification,
  including the TOPL automata used in our implementation,
  are nondeterministic
  \cite{%
    dblp:conf/tacas/grigoredpt13,%
    dblp:conf/isola/havelund14,%
    dblp:journals/corr/abs-1112-5761%
  },
  which led us to a tree data structure that can track such automata.
Some automata models are even more general,
 such as
    quantified event automata~\cite{dblp:conf/fm/barringerfhrr12} and
    alternating automata~\cite{dblp:journals/fmsd/finkbeiners04}.
The construction of error-trace providing monitors for such automata is an intriguing challenge that seems to raise further fundamental algorithmic questions.

\paragraph{Acknowledgements.} 

Grigore is supported by EPSRC Programme Grant Resource Reasoning (EP/H008373/2).
Kiefer is supported by a Royal Society University Research Fellowship.
We thank the reviewers for their comments.
We thank Rasmus Lerchedahl Petersen
  for his contribution to the implementation of an early version of the {\sf amortized} algorithm
  in the runtime verifier TOPL\null.

{\raggedright
\bibliographystyle{plain}
\bibliography{db}}

\iftechrep{
\newpage\appendix
\section{Additional Graphs}\label{sec:additionalmaterial} 

\begin{figure}[h] 
\def\lf#1#2{\begin{minipage}{2in}\centering\includegraphics{plots/#1}\par#2\end{minipage}}
\hbox to\hsize{\hss%
\begin{minipage}{6.1in}
\lf{chain-runtime.png}{(a)~as linear buffers}
\lf{regexp-runtime.png}{(b)~regular expression searching}
\lf{rv-runtime.png}{(c)~runtime verification}
\end{minipage}\hss}
\caption{
  The average number of memory references per tree buffer operation.
  Unlike \autoref{fig:exp-runtime-nogc}, these plots include the {\sf gc} algorithm.
}\label{fig:exp-runtime}
\end{figure}
%
\section{Proofs}\label{sec:proofs} 

All results talk about sequences of modifying operations,
  but this is without loss of generality:
(1)~any call to $\history$ takes $\Theta(1)$ space and $O(h)$ time in all algorithms;
(2)~any call to $\expand(x,\{y_1,\ldots,y_n\})$ is equivalent to the segment of operations
\begin{align*}
  \addchild(x,y_1);\; \ldots\;;\; \addchild(x,y_n);\; \deactivate(x)
\end{align*}
Given these observations,
  we can use the results from below to deduce the space and time usage
  of any sequence of operations.

The following lemma about extensive sequences will be used in the proofs of Theorems \ref{thm-amortized}~and~\ref{thm-main}.
\begin{lemma} \label{lem-extensive-nondecr}
Consider an extensive sequence of $\ell$~operations.
Let $n \ge 1$.
Then for all $i,j$ with $0 \le i \le j \le \ell$ we have
$|\Hs{i}_{<n}| - 1 \le |\Hs{j}_{<n}|$.
\end{lemma}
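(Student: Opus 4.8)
The plan is to reduce the statement to tracking the single integer $a_k := |\Hs{k}_{<n}|$ as the modifying operations are applied, one at a time, and to bound how much it can change. The most useful first step is to replace the height-based definition of $\Hs{k}_{<n}$ by an equivalent, more operational one: a node~$z$ has height $<n$ exactly when some active node lies in the subtree of~$z$ at distance $<n$, i.e.\ when $z$ is one of the at most~$n$ nodes on the upward $\parent$-path of length $n-1$ emanating from some active node. Writing $P(w)$ for the set of ancestors of an active node~$w$ within distance $n-1$ (including~$w$ itself), this says $\Hs{k}_{<n} = \bigcup_{w}P(w)$, the union taken over all active nodes~$w$. This reformulation sidesteps the awkward fact that a single $\deactivate$ can change the heights of many far-away nodes: all such effects are captured automatically as the removal of one set $P(w)$ from the union.

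With this in hand I would compute the per-operation change of~$a_k$. For $\addchild(x,y)$, the node~$x$ stays active and~$y$ becomes a new active leaf, so the only genuinely new contribution to the union is~$\{y\}$ — every ancestor of~$y$ within distance $n-1$ already lay in $P(x)$, and $x$ was active before, so $P(x)$ was already contained in $\Hs{}_{<n}$; hence $a_k = a_{k-1}+1$. A $\deactivate$ can only shrink the union, and in general it can remove up to~$n$ nodes; this is exactly where extensivity is essential. In an extensive sequence $\deactivate(x)$ is immediately preceded by $\addchild(x,y)$, so at the moment of deactivation $x$ has an active child~$y$, and $P(y)$ survives in the union. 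Comparing $P(x)$ with $P(y)$, I would check that $P(x)\setminus P(y)$ contains at most the single $(n-1)$-st ancestor of~$x$; consequently at most one node leaves $\Hs{}_{<n}$, giving $a_k \ge a_{k-1}-1$.

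Finally I would combine these two facts over an arbitrary interval. Fixing $0 \le i \le j \le \ell$ and letting $p$ and $d$ be the numbers of $\addchild$ and $\deactivate$ operations among steps $i+1,\dots,j$, the per-step bounds telescope to $a_j - a_i \ge p - d$. It remains to argue $p \ge d-1$. Each $\deactivate$ in the range is preceded, one step earlier, by a distinct $\addchild$; such a preceding $\addchild$ can fall outside the range $\{i+1,\dots,j\}$ only when the $\deactivate$ sits at step $i+1$, and there is at most one such step. Hence at least $d-1$ of these preceding $\addchild$ operations lie inside the range, so $p \ge d-1$ and therefore $a_j - a_i \ge p - d \ge -1$, which is precisely $|\Hs{i}_{<n}| - 1 \le |\Hs{j}_{<n}|$.

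I expect the main obstacle to be the $\deactivate$ analysis: verifying cleanly, including the boundary case $n=1$ and the case of paths shorter than~$n$, that the presence of the active child~$y$ limits the loss to the single $(n-1)$-st ancestor of~$x$. The union-of-paths reformulation is what makes this tractable, and the closing counting argument that pairs each $\deactivate$ with its preceding $\addchild$ is then routine bookkeeping.
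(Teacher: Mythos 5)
Your proof is correct, and its kernel is the same as the paper's; the difference lies only in how the per-step bounds are assembled into the interval bound. Your union-of-paths reformulation is implicit in the paper as well (it writes each $z \in \Hs{i}_{<n}$ as $\parent^k(u)$ for an active $u$ and $k<n$), and your two step analyses --- $\addchild(x,y)$ adds exactly the one node $y$ to $H_{<n}$, and a $\deactivate(x)$ in an extensive sequence loses at most the single node $\parent^{n-1}(x)$ because the just-added active child $y$ keeps all nearer ancestors of $x$ alive --- are precisely the ingredients of the paper's one-step inequality \eqref{eq:decrease-1}. From there the two arguments diverge: the paper proves a second local fact \eqref{eq:decrease-2}, namely $|\Hs{i}_{<n}| \le |\Hs{i+2}_{<n}|$ (extensivity forbids two consecutive $\deactivate$s, and the $\addchild$'s gain of $+1$ compensates the possible unit loss), iterates it, and closes with a parity case split; you instead telescope the one-step bounds to $|\Hs{j}_{<n}| - |\Hs{i}_{<n}| \ge p - d$ and prove $p \ge d-1$ by pairing each $\deactivate$ in the window with its immediately preceding $\addchild$, at most one of which (that of a $\deactivate$ at step $i+1$) can fall outside the window. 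These finishes encode the same pairing --- the paper does it locally and iterates, you do it globally and count --- so both are equally elementary and equally valid; yours trades the paper's parity case split for a small counting argument, and both correctly confine the use of extensivity to the two places where it is genuinely needed.
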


\begin{proof}
We first establish these two facts:
\begin{align}
  |\Hs{i}_{<n}| - 1 &\le |\Hs{i+1}_{<n}| &&\text{for $0 \le i < \ell$}
    \label{eq:decrease-1}\\
  |\Hs{i}_{<n}| &\le |\Hs{i+2}_{<n}| &&\text{for $0 \le i < \ell - 1$}
    \label{eq:decrease-2}
\end{align}

For \eqref{eq:decrease-1}, we do a case analysis on the $(i+1)$th operation.
The interesting case is that in which the $(i+1)$th operation is a $\deactivate(x)$,
  for some~$x$.
Because the sequence is extensive, the $i$th operation must be $\addchild(x,y)$, for some~$y$.
Consider now an arbitrary node $z\in\Hs{i}_{<n}$.
By the definition of $\Hs{i}_{<n}$,
  there must exist an active node~$u$ such that $z=\parent^k(u)$, for some $k<n$.
If $u\ne x$, then $u$ remains active after the $\deactivate(x)$ operation,
  and hence $z\in\Hs{i+1}_{<n}$.
If $u=x$, then $z=\parent^{k+1}(y)$.
In this case, if $k+1<n$, then again $z\in\Hs{i+1}_{<n}$.
Thus, there is at most one element of $\Hs{i}_{<n}$ that might not belong to $\Hs{i+1}_{<n}$,
  namely $\parent^{n-1}(x)$.
We proved~\eqref{eq:decrease-1}.

For \eqref{eq:decrease-2}, note that in an extensive sequence
  at most one of the $(i+1)$th and $(i+2)$th modifying operations is a $\deactivate$.
Given~\eqref{eq:decrease-1} and given that $\addchild$ increases by~$1$
  the number of active nodes, \eqref{eq:decrease-2}~follows.

Now, take $i$~and~$j$ such that $i\le j$.
By repeated application of~\eqref{eq:decrease-2}
  we know that $|\Hs{i}_{<n}|\le|\Hs{i+2p}_{<n}|$,
  for all~$p$ such that $0\le i+2p\le\ell$.
In particular, either $|\Hs{i}_{<n}|\le|\Hs{j}_{<n}|$ or $|\Hs{i}_{<n}|\le|\Hs{j-1}_{<n}|$.
In the first case we are done;
in the second case we find the desired result by using~\eqref{eq:decrease-1}.
\qed
\end{proof}

\subsection{Proof of \autoref{thm-amortized}} 

\begin{qtheorem}{\ref{thm-amortized}}
\stmtthmamortized
\end{qtheorem}

A \emph{garbage collection cycle}
  is a segment~$\sigma$ of some sequence of modifying operations such that
\begin{itemize}
\item
  the first operation of $\sigma$ follows immediately
    after an operation that triggered a garbage collection, or
    after $\initialize$; and
\item
  the operations of~$\sigma$ do not trigger a garbage collection,
    except possibly the last operation.
\end{itemize}
We begin by proving the following lemma.
\begin{lemma} \label{lem-amortized-segment}
There exists a constant~$c$ such that the runtime of any garbage collection cycle~$\sigma$
  is at most $c \cdot k$,
  where $k$~is the length of~$\sigma$.
\end{lemma}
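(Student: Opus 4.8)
The plan is to split the runtime of a cycle $\sigma$ into two parts: the cost of the many cheap operations, and the cost of the single $\gc$ that may occur at its very end. In the amortized algorithm every $\addchild$ and $\deactivate$ performs only a constant amount of bookkeeping, and $\deactivate$ is the naive one (it merely removes $x$ from $\Active$ and does not touch $\mem$). Hence, apart from a possible $\gc$ call, each operation of $\sigma$ runs in $O(1)$ time, so the non-$\gc$ work is $O(k)$. By the definition of a garbage collection cycle, at most one operation — the last — triggers a $\gc$; if none does, the bound $O(k)$ is immediate, so I would assume the last operation is an $\addchild$ that calls $\gc$, and bound that call.

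First I would bound the cost of this single $\gc$ by the number of nodes in memory just before it runs. Write $\mem_-$ and $\mem_+$ for the value of $\mem$ immediately before and after the $\gc$. The while-loop of $\gc$ (\autoref{fig:alg-gc}) does constant work per list entry it processes, and the successive lists $\mathit{Level}$ are exactly $H_0, H_1, \ldots, H_{h-2}$, which partition a subset of the surviving nodes $H_{<h}$; thus the loop costs $O(|H_{<h}|) = O(\mem_+)$. The reference-counting phase then issues $|H_{h-1}| \le \mem_+$ initial $\textsc{delete\_parent}$ calls plus one recursive call per deleted node, and the number of deleted nodes is $\mem_- - \mem_+$. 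Summing these contributions, the $\gc$ runs in time $O(\mem_+ + (\mem_- - \mem_+)) = O(\mem_-)$.

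Next I would relate $\mem_-$ to the length $k$ via the doubling rule. Let $s_0$ be the common value of $\mem$ and $\memOld$ at the start of $\sigma$; they coincide because $\memOld$ is reset to $\mem$ right after every $\gc$ and after $\initialize$. Throughout $\sigma$ the field $\memOld$ stays fixed at $s_0$, each $\deactivate$ leaves $\mem$ unchanged, and each $\addchild$ increments $\mem$ by exactly one. Therefore $\gc$ fires precisely when $\mem$ reaches $2\memOld = 2s_0$, i.e. after exactly $s_0$ invocations of $\addchild$; in particular $\mem_- = 2s_0$ and $k \ge s_0 = \mem_-/2$. Combining with the previous step, the $\gc$ costs $O(\mem_-) = O(s_0) = O(k)$, and together with the $O(k)$ spent on the remaining operations the total runtime of $\sigma$ is $O(k)$, which is the claimed bound.

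The main obstacle I anticipate is the middle step: pinning $\gc$'s runtime to $O(\mem_-)$. One must argue carefully that the lists $\mathit{Level}$ over successive iterations are the disjoint level sets $H_0,\ldots,H_{h-2}$, so the while-loop costs $O(|H_{<h}|)$ in total rather than $O(h\cdot|H_{<h}|)$, and that the recursive $\textsc{delete\_parent}$ calls amortize to $O(1)$ per removed node (using $\children$ as a reference count). Once $\gc$ is bounded by $O(\mem_-)$, the doubling argument forcing $k \ge \mem_-/2$ is routine bookkeeping.
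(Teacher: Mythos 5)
Your proof is correct and takes essentially the same route as the paper's: split $\sigma$ into the constant-time non-triggering operations plus the single terminal $\gc$, bound that $\gc$ call by $O(\mem)$, and use the doubling rule ($\mem = 2\cdot\memOld$ at the trigger, with the number of $\addchild$ operations in $\sigma$ forcing $\mem \le 2k$). The only difference is that you work out in detail why $\gc$ costs $O(\mem)$ (level-by-level loop plus amortized reference-counting deletions), a bound the paper simply asserts with a constant $c_2$.
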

\begin{proof}
Recall the implementation from \autoref{fig:alg-amortized}.
Each modifying operation that does not trigger the garbage collector takes $\le c_1$ time,
  for some constant~$c_1$.
Thus, if $\sigma$ does not trigger the garbage collector then its runtime is $\le c_1\cdot k$.
It remains to check the case in which the last operation of~$\sigma$
  does trigger the garbage collector.

The time spent in the garbage collector is $\le c_2\cdot\mem$, for some constant~$c_2$.
In order to find an upper bound for~$\mem$, we make two observations:
\begin{itemize}
\item when the garbage collector is triggered, $\mem = 2\cdot\memOld$, and
\item
  the number $\mem-\memOld$ of nodes added to the tree
  is the number of $\addchild$ operations in~$\sigma$
  which in turn is at most~$k$
\end{itemize}
Combining these two observations we get that $\mem\le 2\cdot k$.

We can now compute a bound for the total runtime of~$\sigma$:
\begin{align*}
 c_1\cdot k + c_2 \cdot\mem \le  c_1 \cdot k + c_2 \cdot (2 \cdot k) = (c_1 + 2 c_2)\cdot k
\end{align*}
Thus, $c:=c_1+2c_2$ has the required property.
\qed
\end{proof}


Now we prove \autoref{thm-amortized}.
\begin{proof}[of \autoref{thm-amortized}]
Consider any sequence~$\sigma$ of $\ell$ modifying operations.
First we prove the statement on time complexity.
The sequence~$\sigma$ can be decomposed into garbage collection cycles.
Applying \autoref{lem-amortized-segment} to each garbage collection cycle,
  and summing up the runtimes,
  we obtain that $\sigma$ takes at most $c \cdot \ell$ time.
This is $O(\ell)$ time.

Next we prove the statements on space complexity.
Pick an arbitrary $k \le \ell$.
Let $k_0 \ge 0$ be the largest number so that $k_0 \le k$ and either $k_0 = 0$ or the $k_0$th operation triggered a garbage collection.
For any $i \ge 0$ write $\mems{i}$ for the value of~$\mem$ after the $i$th operation.
The garbage collection ensures $\mems{k_0} = |\Hs{k_0}_{<h}|$.
Further, the implementation of~$\addchild$ ensures $\mems{k} \le 2 \cdot \mems{k_0}$,
 and so $\mems{k} \le 2 \cdot |\Hs{k_0}_{<h}|$.
For all~$i$ we have $\samos{i} \in \Theta(\mems{i})$ and $\sgcs{i} \in \Theta(|\Hs{i}_{<h}|)$.
It follows $\samos{k} \in O\big( \sgcs{k_0} \big)$ and hence $\samos{k} \in O\big( \max_{j \le k} \sgcs{j} \big)$, which is the first of the two statements on space complexity.
For the second one, assume that $\sigma$ is extensive.
By \autoref{lem-extensive-nondecr} we have $|\Hs{k}_{<h}| \ge |\Hs{k_0}_{<h}| - 1$,
so \[ \mems{k} \le 2 \cdot |\Hs{k_0}_{<h}| \le 2 \cdot \left(|\Hs{k}_{<h}|+1\right)\;,\]
and hence $\samos{k} \in O\big( \sgcs{k} \big)$.
\qed
\end{proof}

\subsection{Proof of \autoref{thm-main}} \label{sub-proof-thm-main} 

In the following, consider the tree obtained in the reference implementation after a fixed sequence of modifying operations.
By~$\Nodes$ we denote the set of nodes of the tree.
The following lemma states a monotonicity property of~$|H_i|$:

\begin{lemma} \label{lem-heights-descending}
We have $|H_{i}| \ge |H_{i+1}|$ for all $i \ge 0$.
As a consequence, we have $|H_{<2 h}| \le 2 |H_{<h}|$.
\end{lemma}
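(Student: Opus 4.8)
The plan is to prove the monotonicity $|H_i| \ge |H_{i+1}|$ by exhibiting an explicit injection $f\colon H_{i+1}\to H_i$; the bound on $|H_{<2h}|$ then drops out by summing. Everything happens in a single fixed tree, so heights are fixed quantities and no operations or time indices need to be considered.

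First I would record how $\height$ behaves along parent pointers. For any non-active node $y$, the shortest path from $y$ to an active descendant must leave $y$ through one of its children, so
\[
  \height(y) = 1 + \min\{\,\height(c) : c \text{ is a child of } y\,\},
\]
with the conventions $\min\emptyset = \infty$ and $1+\infty = \infty$ (this covers the case where $y$ has no active descendant, i.e.\ $\height(y)=\infty$). Consequently, if $y$ has \emph{finite} height $i+1 \ge 1$, then $y$ has at least one child of height exactly~$i$, while no child has smaller height.

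The key step is to define $f$ by sending each $y\in H_{i+1}$ to some child $f(y)$ of $y$ with $\height(f(y))=i$; such a child exists by the recurrence above, so $f(y)\in H_i$. Injectivity is immediate from the tree structure: every node has a unique parent, hence the children sets of distinct nodes $y\ne y'$ are disjoint, and so $f(y)\ne f(y')$. This yields $|H_{i+1}| \le |H_i|$ for all $i \ge 0$.

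Finally, for the consequence I would use that $|H_0|,|H_1|,\dots$ is non-increasing. Writing $|H_{<2h}| = \sum_{i=0}^{h-1}|H_i| + \sum_{i=0}^{h-1}|H_{h+i}|$ and bounding $|H_{h+i}| \le |H_i|$ term by term gives $|H_{<2h}| \le 2\sum_{i=0}^{h-1}|H_i| = 2\,|H_{<h}|$. I do not expect a serious obstacle; the only place that needs a little care is the height recurrence and its interaction with the value $\infty$, to make sure that a node counted in $H_{i+1}$ genuinely has a \emph{child} in $H_i$, rather than merely a descendant at distance $i+1$.
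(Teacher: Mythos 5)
Your proof is correct and takes essentially the same approach as the paper, just in dual form: the paper observes $H_{i+1} \subseteq \parent(H_i)$ and counts using the fact that $\parent$ is a (partial) function, while you select for each $y \in H_{i+1}$ a child of height exactly~$i$, giving an injection $H_{i+1} \to H_i$ --- both rest on the same structural fact that every node of finite height $i+1$ has a child of height~$i$. Your explicit height recurrence and the term-by-term summation for $|H_{<2h}| \le 2|H_{<h}|$ merely spell out what the paper compresses into ``the statement follows.''
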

\begin{proof}
Denote by $\parent : \Nodes \to \Nodes$ the partial function that assigns to a node its parent; $\parent(x)$ is undefined for the root~$x$.
Extend~$\parent$ to $\parent : 2^\Nodes \to 2^\Nodes$ in the standard way.
Then we have $H_{i+1} \subseteq \parent(H_i)$ and $|H_i| \ge |\parent(H_i)|$.
The statement follows.
\qed
\end{proof}

Let the \df{level} of node~$x$, denoted by $\level(x)$, be $\lfloor\depth(x)/h\rfloor$.
A node~$x$ is called \emph{recent} if there exists an active node $y$ in the subtree of~$x$
  such that $\level(x)\ge\level(y)-1$.
Let $R$ denote the set of recent nodes.
\begin{lemma} \label{lem-recent-are-not-high}
We have $R \subseteq H_{< 2 h}$.
\end{lemma}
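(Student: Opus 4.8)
The plan is to unfold both definitions and reduce the claim to a single piece of floor arithmetic. Fix an arbitrary recent node~$x$, and let $y$ be an active node in the subtree of~$x$ that witnesses recency, so that $\level(x) \ge \level(y) - 1$. Since $y$ is active and lies in the subtree of~$x$, the distance from~$x$ to~$y$ gives an upper bound on the height of~$x$: because $\height(x)$ is the \emph{shortest} distance from~$x$ to an active node in its subtree, and $y$ is one such node, we have $\height(x) \le \depth(y) - \depth(x)$. In particular this already certifies that $\height(x)$ is finite, so $x$ does not have height~$\infty$; it remains only to sharpen the bound to~$2h$.

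The heart of the argument is to convert the level inequality into a bound on $\depth(y) - \depth(x)$. Writing $d_x = \depth(x)$ and $d_y = \depth(y)$, and recalling that $\level(z) = \lfloor \depth(z)/h \rfloor$, the recency condition reads $\lfloor d_y/h \rfloor \le \lfloor d_x/h \rfloor + 1$. From this I would bound $d_y$ from above and $d_x$ from below using the standard floor estimates: $d_y < h(\lfloor d_y/h\rfloor + 1) \le h(\lfloor d_x/h\rfloor + 2)$ and $d_x \ge h \lfloor d_x/h\rfloor$. Subtracting gives $d_y - d_x < 2h$, and therefore $\height(x) < 2h$, i.e.\ $x \in H_{< 2h}$. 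Since $x$ was an arbitrary recent node, this yields $R \subseteq H_{< 2h}$.

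The only step requiring care is the floor arithmetic, and specifically confirming that the slack of~$-1$ in the definition of recency, together with the coarsening introduced by $\lfloor \cdot / h\rfloor$, produces exactly the factor~$2h$ rather than~$h$: each of the two sources --- the $-1$ in the level comparison and the rounding within the level function --- contributes a full block of size~$h$ to the gap between $d_x$ and $d_y$. I would also dispatch the degenerate case in which $x$ itself is active by taking $y = x$, which satisfies $\level(x) \ge \level(x) - 1$ and gives $\height(x) = 0 < 2h$, so the claim holds trivially there as well.
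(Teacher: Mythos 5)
Your proof is correct and follows essentially the same route as the paper's: both take the witnessing active node~$y$, translate the level inequality $\level(x)\ge\level(y)-1$ into floor arithmetic on depths to conclude $\depth(y)-\depth(x)<2h$, and then observe that this distance bounds $\height(x)$. The only cosmetic difference is in packaging the floor estimate (the paper applies the fact that $\lfloor a/h\rfloor\ge\lfloor b/h\rfloor$ implies $b-a<h$, whereas you use the standard upper/lower bounds directly), which is an equivalent calculation.
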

\begin{proof}
We pick an arbitrary $x\in R$, and show that $x\in H_{<2h}$.

Because $x$~is recent, there exist an active node~$y$ and an integer~$k\ge0$
  such that $\level(x)\ge\level(y)-1$ and $x=\parent^k(y)$.
Thus,
\begin{align*}
\left\lfloor\frac{\depth(x)}{h}\right\rfloor
  \ge \left\lfloor\frac{\depth(y)}{h}\right\rfloor-1
  = \left\lfloor\frac{\depth(x)+k-h}{h}\right\rfloor
\end{align*}
In general, if $\lfloor a/h\rfloor\ge\lfloor b/h\rfloor$ then $b-a<h$.
In our case, $k-h<h$, so $k<2h$.
In other words,
  if $y$ is a witness for $x\in R$,
  then $y$~is also a witness for $x\in H_{<2h}$.
\qed
\end{proof}


A node~$x$ is said to be a \df{fringe} node when $\depth(x)\equiv0 \pmod h$ and $\cnt(x)=0$.
A node~$x$ is said to be a \df{doomed} node
  when it is inactive and each of its children is either a fringe node or a doomed node.
Let $D$ denote the set of doomed nodes.
It is easy to check that the {\sf real-time} algorithm
  schedules for deletion (and then deletes) only doomed nodes.

\begin{lemma}\label{lem:doomed-or-recent} 
Every node is either doomed or recent: $\Nodes=R\uplus D$.
\end{lemma}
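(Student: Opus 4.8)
The plan is to translate both \emph{fringe} and \emph{doomed} into statements about levels and the placement of active nodes, using the intended meanings of $\rep$ and $\cnt$: that $\rep(z)$ is the ancestor of $z$ at depth $h\cdot\level(z)$, and that $\cnt(f)$ counts the active nodes whose representative is $f$. First I would record two facts about a fringe node~$f$ (where $\depth(f)\equiv0\pmod h$, so $\depth(f)=h\cdot\level(f)$, and $\cnt(f)=0$): (i)~$f$ is inactive, since an active node with depth a multiple of~$h$ is its own representative and would force $\cnt(f)\ge1$; and (ii)~no active node in the subtree of~$f$ has level $\level(f)$, since such a node would have $f$ as its representative and again force $\cnt(f)\ge1$.

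Next I would unfold the recursive definition of doomed into a separation statement: $x$ is doomed iff for every active node~$z$ in the subtree of~$x$ there is a fringe node \emph{strictly between} $x$ and~$z$ (a proper descendant of~$x$ that is a proper ancestor of~$z$). This is proved by structural induction over the subtree: the doomed predicate for~$x$ constrains the \emph{children} of~$x$, stopping at fringe children and recursing into non-fringe children, each of which must itself be inactive; so the active nodes reachable from~$x$ without first meeting a fringe node are exactly the obstructions to doomedness. The subtlety I would be most careful about is that the separating fringe node must be a \emph{proper} descendant of~$x$ — whether $x$ itself is fringe is irrelevant to whether $x$ is doomed. This ``strictly between'' is precisely what makes both inclusions go through.

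With the separation statement in hand, disjointness $R\cap D=\emptyset$ is level arithmetic. Suppose $x$ were both recent and doomed, and let $y$ be a recent-witness: active, in the subtree of~$x$, with $\level(y)\le\level(x)+1$. Doomedness supplies a fringe node~$f$ strictly between $x$ and~$y$. From $\depth(f)=h\cdot\level(f)>\depth(x)\ge h\cdot\level(x)$ I get $\level(f)\ge\level(x)+1$; and by fact~(ii), $y$ being active below~$f$ forces $\level(y)\ne\level(f)$, hence $\level(y)\ge\level(f)+1\ge\level(x)+2$, contradicting $\level(y)\le\level(x)+1$.

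Finally, for covering $\Nodes\subseteq R\cup D$ I would show that not-doomed implies recent. If $x$ is not doomed, fix an active node~$z$ in its subtree with no fringe node strictly between $x$ and~$z$; note $\level(z)\ge\level(x)$. If $\level(z)\le\level(x)+1$, then $z$ already witnesses recentness. Otherwise let $u$ be the ancestor of~$z$ at depth $h\cdot(\level(x)+1)$; it lies strictly between $x$ and~$z$, so it is non-fringe, and since $\depth(u)\equiv0\pmod h$ this gives $\cnt(u)>0$, so some active node~$z'$ has $\rep(z')=u$, i.e.\ an active node in the subtree of~$u$ (hence of~$x$) at level $\level(x)+1$, which witnesses recentness. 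The real obstacle, I expect, is establishing the $\rep$/$\cnt$ semantics and the strictly-between characterization of doomed cleanly; once those are set up, both directions reduce to the short level computations above.
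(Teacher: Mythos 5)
Your proof is correct, but it is organized differently from the paper's. The paper proves the two inclusions directly by two separate inductions: for ``not doomed $\Rightarrow$ recent'' it inducts on $k=h-(\depth(x)\bmod h)$, descending one child at a time through inactive, non-doomed, non-fringe nodes until it reaches a node $u$ with $\depth(u)\equiv0\pmod h$, where $\cnt(u)>0$ yields an active node at level $\level(x)+1$; for ``recent $\Rightarrow$ not doomed'' it takes the path from a recent-witness $y$ up to $x$, shows by level arithmetic that no node on that path below $x$ is fringe, and then inducts up the path ($y$ is active hence not doomed; the parent of a non-doomed, non-fringe node is not doomed). You instead factor out a single structural-induction lemma --- $x$ is doomed iff every active $z$ in its subtree is separated from $x$ by a fringe node strictly between them --- after which both covering and disjointness reduce to short level computations, with your direct jump to the ancestor $u$ of $z$ at depth $h\cdot(\level(x)+1)$ replacing the paper's one-child-at-a-time descent. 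The core ingredients are identical (fringe nodes are inactive and admit no active node of their own level below them; a non-fringe node at depth divisible by $h$ has positive count, hence an active node at its level within distance $<h$), but your decomposition makes the ``under-approximation'' intuition explicit and renders disjointness nearly immediate, at the cost of one extra lemma, whereas the paper avoids the lemma and pays with two index-heavy inductions. One remark: the $\rep$/$\cnt$ semantics you worry about establishing is not a real obstacle --- in the paper these are definitions (the representative is the closest ancestor at depth a multiple of $h$; the count is the number of active nodes with that representative), and the paper's own proof invokes exactly the facts you need, e.g.\ that $\cnt(u)>0$ yields an active $z$ with $u=\parent^l(z)$ for some $0\le l<h$.
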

\begin{proof}
We prove first that a node that is not doomed must be recent;
we will later prove that a recent node must be not doomed.

Let $x$ be a node that is not doomed.
If there exists an active node~$y$ in the subtree of~$x$ such that $\level(x)=\level(y)$,
  then $x$~is recent.
Thus, for what follows, assume that no such node~$y$ exists.
In this case, we will prove by induction on $k:=h-\bigl(\depth(x)\bmod h\bigr)$
  that there exists a node~$z$ in the subtree of~$x$ such that $\level(x)=\level(z)-1$,
  and hence $x$~is, again, recent.
Note that $1\le k\le h$.

The base case is $k=1$.
By the definition of doomed,
  $x$~is active, or it has a child~$u$ that is not doomed and not fringe.
If $x$~were active, then we could take $y:=x$; so $x$ must be inactive.
Because $k=1$, it must be that $\depth(u)\equiv0\pmod h$.
Since $u$~is not fringe, it must be that $\cnt(u)>0$.
Hence, there exists an active node~$z$ and an integer $0\le l< h$ such that $u=\parent^{l}(z)$.
We have that $\level(x)=\level(u)-1=\level(z)-1$,
  and so $z$~has the desired properties.

For the induction step case, pick an arbitrary~$k$ such that $1<k\le h$.
As above, $x$~must be inactive, and must have a child~$u$ that is not doomed and not fringe.
In addition, $\level(x)=\level(u)$, because of the limits on~$k$.
By the induction hypothesis, there exists an active node~$z$ in the subtree of~$u$
  such that $\level(u)=\level(z)-1$.
This node~$z$ is also in the subtree of~$x$, and indeed $\level(x)=\level(z)-1$.

We conclude that if a node is not doomed then it is recent.

For the other direction, let $x$ be a recent node.
By the definition of recent, there exists an active node~$y$ in the subtree of~$x$
  such that $\level(x)\ge\level(y)-1$.
Let $k$ be an integer such that $x=\parent^k(y)$,
  and consider the path from~$y$ to~$x$, excluding~$x$:\;
  $\parent^0(y), \parent^1(y),\ldots,\parent^{k-1}(y)$.
None of these nodes is a fringe node:
  A fringe node would have to be in a different level than the active node~$y$,
  but that would force $\level(x)<\level(y)-1$.
We can thus prove by induction that all these nodes are not doomed:
  $\parent^0(y)$ is not doomed because it is active,
  and $\parent^{l+1}(y)$ is not doomed because $\parent^l(y)$ is not doomed and not fringe
    for $0<l<k$.
In fact, the induction from above also established that $x$~is not doomed.

We conclude that if a node is recent then it is not doomed.
\qed
\end{proof}


In the following we consider a sequence of $\ell$ modifying operations.
We write $\Rs{k}$ for the set of recent nodes after $k$~operations,
and $\Ms{k}$ for the set of nodes \emph{in memory} after $k$~operations, i.e., nodes that have been added but not (yet) deleted by the {\sf real-time} algorithm.

\begin{lemma} \label{lem-recent-are-in-memory}
For all $k \le \ell$:
\begin{itemize}
\item[(a)]
 We have $\Rs{k} \subseteq \Ms{k}$.
\item[(b)]
 If $\Ms{k} - \Rs{k} \ne \emptyset$, then the queue is nonempty
  after $k$ operations.
\end{itemize}
\end{lemma}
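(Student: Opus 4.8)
The plan is to avoid a single monolithic induction and instead isolate two structural invariants of the \textsf{real-time} data structure, each provable by a routine induction on the number of modifying operations, and then read off (a) and (b) with almost no further work. The two invariants are:
\begin{itemize}
\item[(I)] every fringe node has $\parent = {\sf nil}$; and
\item[(II)] every in-memory node that is inactive and has $\children = 0$ lies in the queue.
\end{itemize}
Invariant~(I) holds because the only way a node reaches $\cnt = 0$ is through the decrement in $\deactivate$, and that decrement is immediately followed by $\cutParent(\rep(x))$ whenever it produces a zero; since all fringe nodes are representatives, this severs each fringe node from its parent. A consequence worth recording separately is that a doomed node's \emph{in-memory} children are all doomed: by definition each child is fringe or doomed, and by~(I) the fringe children have been cut off, so they are no longer children in the in-memory forest.

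Given these facts, part~(b) is short. Suppose $\Ms{k} - \Rs{k} \ne \emptyset$; by the partition $\Nodes = R \uplus D$ of \autoref{lem:doomed-or-recent}, this element is a doomed node still in memory. Starting from it I would descend along in-memory child edges; by the consequence of~(I) every node so reached is again doomed, and since the tree is finite the descent ends at a doomed node $z$ with $\children(z)=0$. As $z$ is doomed it is inactive, so by invariant~(II) the node $z$ is in the queue, which is therefore nonempty.

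Part~(a) also follows. I would argue that a recent node is never deleted. Any deleted node was first dequeued, and at the moment of enqueuing it was inactive and childless (both $\enque$ sites, in $\deactivate$ and in $\cutParent$, guard on $\children = 0$ and on inactivity); hence it was doomed. The remaining point is that doomedness, once attained, is permanent: a deactivated node stays inactive and, being inactive, gains no new children even in the \textsf{naive} tree, since $\addchild$ requires an active parent; moreover the $\cnt$ of a fringe representative can never rise again, because increasing it would require making active some node in its own level block, and no active node remains there to extend by $\addchild$. Hence a deleted node remains doomed forever and, by disjointness in \autoref{lem:doomed-or-recent}, is never recent; so no node of $\Rs{k}$ can have been deleted, giving $\Rs{k}\subseteq\Ms{k}$.

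The hard part is not these derivations but the base invariants themselves: I expect the real work to be the inductive verification of~(II) together with the permanence of fringeness used in~(a). Both require a careful case analysis of $\deactivate$ --- tracing the interleaving of the $\cnt$ decrement, the conditional $\enque$, the cascading $\cutParent$, and the single $\processQueue$ deletion --- to confirm that a node is enqueued exactly when it first becomes inactive and childless and is removed only upon deletion, and that no $\addchild$ can resurrect the $\cnt$ of a representative whose level block has emptied. This bookkeeping is the delicate point on which the whole argument rests.
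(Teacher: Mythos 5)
Your proposal is correct and takes essentially the same route as the paper: part (a) is the partition $\Nodes = R \uplus D$ of \autoref{lem:doomed-or-recent} combined with the observation that only doomed nodes are ever enqueued and deleted (together with the permanence of doomedness, which the paper leaves implicit), and part (b) is the paper's observation that the reference-counting scheme directly mirrors the definition of doomed nodes, which your invariants (I) and (II) and the descent argument simply make concrete. The paper's own proof states both points in two sentences and treats the bookkeeping as easy to check, so your write-up is a faithful, more detailed elaboration of exactly that argument.
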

\begin{proof}
For point~(a),
  \autoref{lem:doomed-or-recent} together with the observation
  that only doomed nodes are scheduled for deletion suffice.
For point~(b),
  observe that the implementation uses a reference counting scheme
  that directly mirrors the definition of doomed nodes.
\qed
\end{proof}

\begin{lemma} \label{lem-invariant}
We have $|\Ms{k}| \le \max_{j \le k} |\Hs{j}_{< 2 h}|$ for all $k \le \ell$.
If the sequence is extensive then $|\Ms{k}| \le |\Hs{k}_{< 2 h}|$ for all $k \le \ell$.
\end{lemma}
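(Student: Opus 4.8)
The plan is to control the in-memory set $\Ms{k}$ through the partition $\Nodes = R \uplus D$ of \autoref{lem:doomed-or-recent}, together with the containment $R \subseteq \Hs{}_{<2h}$ of \autoref{lem-recent-are-not-high} and the queue-occupancy test of \autoref{lem-recent-are-in-memory}. The first move is to prove the auxiliary bound $|\Ms{k}| \le \max_{j \le k} |\Rs{j}|$ by induction on~$k$ (the base case being $\Ms{0} = \Rs{0}$ right after $\initialize$); the two claimed inequalities then follow quickly.

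The heart of the induction is an analysis of when memory grows. Inspecting the {\sf real-time} algorithm, every modifying operation invokes $\processQueue$ exactly once, which deletes at most one node, while only $\addchild$ ever creates a node, and then exactly one. Hence $|\Ms{k}|$ can exceed $|\Ms{k-1}|$ only when operation~$k$ is an $\addchild$ whose queue was empty beforehand, and in that case $|\Ms{k}| = |\Ms{k-1}| + 1$. An empty queue after operation $k-1$ forces $\Ms{k-1} = \Rs{k-1}$, by combining parts~(a) and~(b) of \autoref{lem-recent-are-in-memory}. Moreover the node created by $\addchild$ is active, hence recent, and $\addchild$ removes no node from the recent set (it deactivates nobody, so every earlier witness survives); therefore $|\Rs{k}| \ge |\Rs{k-1}| + 1 = |\Ms{k-1}| + 1 = |\Ms{k}|$. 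So at every step where memory grows we have $|\Ms{k}| \le |\Rs{k}|$, and at every other step $|\Ms{k}| \le |\Ms{k-1}|$; the induction hypothesis then yields $|\Ms{k}| \le \max_{j \le k} |\Rs{j}|$. Chaining this with $|\Rs{j}| \le |\Hs{j}_{<2h}|$ establishes the first statement.

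For the extensive case I would replace the maximum over $j \le k$ by the value at~$k$ using \autoref{lem-extensive-nondecr} with $n = 2h$, which guarantees that $|\Hs{}_{<2h}|$ drops by at most one over any interval and does not drop across an $\addchild$ immediately followed by a $\deactivate$. Writing $j^\ast \le k$ for the last step at which memory grew, the argument above gives $|\Ms{k}| \le |\Ms{j^\ast}| \le |\Rs{j^\ast}| \le |\Hs{j^\ast}_{<2h}|$, and \eqref{eq:decrease-1} bounds the right-hand side by $|\Hs{k}_{<2h}| + 1$.

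The main obstacle is exactly this stubborn off-by-one. I expect to eliminate it by analysing each $\deactivate(x)$ together with its mandatory preceding $\addchild(x,y)$ and reducing to the single residual configuration in which that $\addchild$ found an empty queue (so $\Ms{} = \Rs{}$ two steps earlier) and the $\deactivate$ ejects the ancestor $\parent^{2h-1}(x)$ from $\Hs{}_{<2h}$. In this configuration one of two sources of slack must be exhibited: either the reference-counting cascade triggered when $\cnt(\rep(x))$ reaches zero frees a node in the very same step, or a doomed node of height below $2h$ already sits in the (never-deleting) naive tree, so that $|\Rs{}|$ was strictly below $|\Hs{}_{<2h}|$ to begin with. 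Making precise the correlation between ``a node leaves $\Hs{}_{<2h}$'' and ``a unit of slack is available'' is the delicate heart of the proof; everything else is bookkeeping with the three preceding lemmas.
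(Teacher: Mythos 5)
Your proof of the first inequality is sound, and it uses the same ingredients as the paper's: memory can grow only at an $\addchild$ whose $\processQueue$ call finds the queue empty; an empty queue forces $\Ms{k-1}=\Rs{k-1}$ by \autoref{lem-recent-are-in-memory}; and $\Rs{j}\subseteq \Hs{j}_{<2h}$ by \autoref{lem-recent-are-not-high}. Packaging this as $|\Ms{k}|\le\max_{j\le k}|\Rs{j}|$ is a harmless variant of the paper's stepwise case analysis, and it does establish $|\Ms{k}| \le \max_{j \le k} |\Hs{j}_{< 2 h}|$.

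The extensive case, however, has a genuine gap, which you yourself flag: telescoping back to the last growth step $j^\ast$ only yields $|\Ms{k}|\le|\Hs{k}_{<2h}|+1$, and your proposed repair is a plan, not a proof. Its second disjunct (``a doomed node of height below $2h$ already sits in the tree, so $|\Rs{}|$ was strictly below $|\Hs{}_{<2h}|$ to begin with'') is not an invariant your setup guarantees, and it is aimed at the wrong moment in time. The paper avoids the off-by-one by running the induction \emph{stepwise}, with hypothesis $|\Ms{k-1}|\le|\Hs{k-1}_{<2h}|$, and splitting on whether $\Ms{k}=\Rs{k}$ holds \emph{after} the $k$th operation. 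If they are equal, then $|\Ms{k}|\le|\Hs{k}_{<2h}|$ follows outright from \autoref{lem-recent-are-not-high}, with no reference to step $k-1$ and hence no loss of $1$. If they differ, the queue is nonempty by \autoref{lem-recent-are-in-memory}~(b), so the $k$th operation genuinely deletes a node; a $\deactivate$ then satisfies $|\Ms{k}|\le|\Ms{k-1}|-1$, and this $-1$ exactly absorbs the drop of at most $1$ in $|\Hs{}_{<2h}|$ permitted by \eqref{eq:decrease-1}, while an $\addchild$ is easier still, since it deactivates nothing and so $|\Hs{k-1}_{<2h}|\le|\Hs{k}_{<2h}|$. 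Your ``last growth step'' framing discards precisely this per-step synchronization between ``memory shrinks by one'' and ``$|\Hs{}_{<2h}|$ may shrink by one,'' which is where your stray $+1$ comes from. The fix is to abandon the telescoping and observe instead that whenever a $\deactivate$ fails to shrink memory, the queue must be empty afterwards, whence $\Ms{k}=\Rs{k}\subseteq\Hs{k}_{<2h}$ holds at time $k$ itself.
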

\begin{proof}
We proceed by induction on~$k$.
The base case ($k = 0$) is trivial.
Let $0<k \le \ell$.
If $\Rs{k} = \Ms{k}$, then we have $\Ms{k} = \Rs{k} \subseteq \Hs{k}_{< 2 h}$ by \autoref{lem-recent-are-not-high}.
Hence $|\Ms{k}| \le |\Hs{k}_{< 2 h}|$.
By applying the induction hypothesis, it follows $|\Ms{k}| \le \max_{j \le k} |\Hs{j}_{< 2 h}|$.
So assume for the rest of the proof that the inclusion $\Rs{k} \subseteq \Ms{k}$ from \autoref{lem-recent-are-in-memory}~(a) is strict.
Then, by \autoref{lem-recent-are-in-memory}~(b), the queue is not empty after $k$~operations.
So the $k$th operation deletes from memory a node in the queue, and we have:
\begin{equation} \label{eq-lem-invariant}
|\Ms{k}| \le
\begin{cases}
|\Ms{k-1}|   & \text{if the $k$th operation is an $\addchild$} \\
|\Ms{k-1}|-1 & \text{if the $k$th operation is an $\deactivate$}
\end{cases}
\end{equation}
In either case we have $|\Ms{k}| \le |\Ms{k-1}|$.
By applying the induction hypothesis, it follows $|\Ms{k}| \le \max_{j \le k} |\Hs{j}_{< 2 h}|$.

Assume for the rest of the proof that the sequence is extensive.
Let the $k$th operation be an $\addchild$.
Then we have:
\[
|\Ms{k}| \ \mathop{\le}^\eqref{eq-lem-invariant} \ |\Ms{k-1}|
\ \mathop{\le}^\text{ind.~hyp.} \ |\Hs{k-1}_{< 2 h}|
\ \le \ |\Hs{k}_{< 2 h}|\;,
\]
where the last inequality is because no node is deactivated in the $k$th operation.
Let the $k$th operation be a $\deactivate$.
Then we have:
\[
|\Ms{k}| \ \mathop{\le}^\eqref{eq-lem-invariant} \ |\Ms{k-1}|-1
\ \mathop{\le}^\text{ind.~hyp.} \ |\Hs{k-1}_{< 2 h}|-1
\ \mathop{\le}^\text{\autoref{lem-extensive-nondecr}} \ |\Hs{k}_{< 2 h}|
\]
This concludes the proof.
\qed
\end{proof}

Now we can prove \autoref{thm-main}:

\begin{qtheorem}{\ref{thm-main}}
\stmtthmmain
\end{qtheorem}
\begin{proof}
 By combining Lemmas \ref{lem-invariant}~and~\ref{lem-heights-descending},
  $|\Ms{k}| \le 2 \max_{j \le k} |\Hs{j}_{<h}|$ for all $k \le \ell$.
 If the sequence is extensive then $|\Ms{k}| \le 2 |\Hs{k}_{<h}|$ for all $k \le \ell$.
 The theorem follows, as $\sgcs{k} \in \Theta\big( |\Hs{k}_{<h}| \big)$.
\qed
\end{proof}

}{} 
\end{document}